\long\def\@makecaption#1#2{%
  \vskip\abovecaptionskip
  %\begin{center}%
  \sbox\@tempboxa{#1: #2}%
  \ifdim \wd\@tempboxa >\hsize
    #1: #2\par
  \else
    \global \@minipagefalse
    \hb@xt@\hsize{\hfil\box\@tempboxa\hfil}%
  \fi
  %\end{center}%
  \vskip\belowcaptionskip}
\theoremstyle{plain}
\theoremstyle{definition}
\newtheorem{prop}{Proposition}[section]
\newcommand{\0}{\bm{0}}
\newcommand{\be}{\begin{eqnarray}}
\newcommand{\ee}{\end{eqnarray}}
\newcommand{\bq}{\begin{eqnarray*}}
\newcommand{\eq}{\end{eqnarray*}}
\newcommand{\bm}[1]{{\boldsymbol{#1}}}
\newcommand{\bA}{{\bm{\rm{A}}}}
\newcommand{\balpha}{{\bm{\alpha}}}
\newcommand{\bg}{{\bm{g}}}
\newcommand{\bh}{{\bm{h}}}
\newcommand{\bH}{{\bm{\rm{H}}}}
\newcommand{\bI}{{\bm{\rm{I}}}}
\newcommand{\blambda}{{\bm{\lambda}}}
\newcommand{\bOmega}{{\bm{\Omega}}}
\newcommand{\bQ}{{\bm{\rm{Q}}}}
\newcommand{\brho}{{\bm{\rho}}}
\newcommand{\bs}{{\bm{s}}}
\newcommand{\bSig}{{\bm{\Sigma}}}
\newcommand{\bt}{{\bm{t}}}
\newcommand{\btau}{{\bm{\tau}}}
\newcommand{\btheta}{{\bm{\theta}}}
\newcommand{\bu}{{\bm{u}}}
\newcommand{\bv}{{\bm{v}}}
\newcommand{\bvtheta}{{\bm{\vartheta}}}
\newcommand{\bX}{{\bm{X}}}
\newcommand{\bxi}{{\bm{\xi}}}
\newcommand{\bY}{{\bm{Y}}}
\newcommand{\bW}{{\bm{W}}}
\newcommand{\bZ}{{\bm{Z}}}
\newcommand{\E}{{\mathbb E}}
\newcommand{\rS}{{\rm S}}
\newcommand{\R}{\mathbb{R}}
\newcommand{\Sp}{\mathcal{S}_p}
\newcommand{\X}{{\cal X}}
\newcounter{sim}
\newenvironment{simul}[1][]{\refstepcounter{sim}\par\medskip
   \noindent \textbf{Simulation~\thesim. #1}}{\smallskip}
\begin{document}
\frenchspacing
%%%%%%%%%%%%%%%%%%%%%%%%%%%%%%%%%%%%%%%%%%%%%%%%%%%%%%%%%%%%%%%%%%%%%%%%%%%
\begin{center}
{ \LARGE\sc Goodness-of-fit tests for multivariate skewed distributions based on the characteristic function \\
} \vspace*{0.5cm} {\large\sc } 
{\large\sc Maicon J. Karling}$^{a}$\footnote{Corresponding author: \href{mailto:maicon.karling@kaust.edu.sa}{maicon.karling@kaust.edu.sa}}, 
{\large\sc Marc G. Genton}$^{a}$,
{\large\sc Simos G. Meintanis}$^{b,c}$\footnote{On sabbatical leave from the University of Athens.}  \\ \vspace*{0.5cm}
{\it $^{a}$Statistics Program, King Abdullah University of Science and Technology, Thuwal, Saudi Arabia} \\
{\it $^{b}$Department of Economics, National and Kapodistrian University
of Athens, Athens, Greece} \\
{\it $^{c}$Pure and Applied Analytics, North-West University, Potchefstroom, South Africa} \\
\vspace{.3cm}
\today
\end{center}

% ################
%     ABSTRACT
% ################

\noindent {\small {\bf Abstract.}
We employ a general Monte Carlo method to test composite hypotheses of goodness-of-fit for several popular multivariate models that can accommodate both asymmetry and heavy tails. Specifically, we consider weighted L2-type tests based on a discrepancy measure involving the distance between empirical characteristic functions and thus avoid the need for employing corresponding population quantities which may be unknown or complicated to work with. The only requirements of our tests are that we should be able to draw samples from the distribution under test and possess a reasonable method of estimation of the unknown distributional parameters. Monte Carlo studies are conducted to investigate the performance of the test criteria in finite samples for several families of skewed distributions. Real-data examples are also included to illustrate our method.}
\vspace{.3cm}

\noindent {\small {\it Keywords:}
Empirical characteristic function, Goodness-of-fit tests,  Heavy tails, Skewed distributions, Skew-normal distribution, Tukey g-and-h distribution.}

\noindent {\small {\it AMS 2020 classification numbers:} 
62F03, % Parametric hypothesis testing
62H12, % Estimation in multivariate analysis
62H15. % Hypothesis testing in multivariate analysis
}

% #################
%    SECTION 1
% #################

\section{Introduction}
Since the late 1980s, L2-type tests for goodness-of-fit based on the characteristic function (CF) have witnessed increasing popularity. The main reason is that the CF uniquely determines the underlying distribution and that it may be consistently estimated by the empirical CF. For multivariate distributions, there is the extra advantage that multivariate CFs and empirical CFs are well-defined and smooth, unlike the cumulative distribution function and its empirical counterpart, and thus it is easier to work with, even when the population distribution function is known. 

Not surprisingly, testing for multivariate normality occupies a prominent place in this setting (see, e.g., Chen and Genton \cite{CG22}, Ebner et al. \cite{EHS21}, Henze \cite{H02}, Henze et al. \cite{HJM19}, Henze and Wagner \cite{HW97}, and Pude\l{}ko \cite{P05}) as a wide range of procedures is available, including CF-based tests. Outside the multivariate Gaussian context, however, the range of CF-based goodness-of-fit procedures is limited to only a handful of distributions, most of them belonging to the elliptical class (see, e.g., Fragiadakis and Meintanis \cite{FM11}, Meintanis et al. \cite{MNT15}, and Sz\'ekely and Rizzo \cite{SR13}). One of the main reasons for this lack of available procedures is that CFs and empirical CFs, despite being smooth, are often required to be numerically integrated in the L2 setting, a task that may be problematic in higher dimensions, let alone the fact that the analytic form of the population CFs may be altogether unknown for most multivariate distributions under test. 

Recently Chen et al. \cite{CJMZ22} proposed a procedure that is based on a Monte Carlo approximation of the CF under test, thereby avoiding the use of corresponding population quantities. However, the elliptical families considered by Chen et al. \cite{CJMZ22}, as important as they may be, render a range of shapes that limit their potential application, given the fact that asymmetry, in addition to excess kurtosis, is typically expected in real data analysis from Economics, Finance, and most other disciplines. 

In this paper, we follow the approach suggested by Chen et al. \cite{CJMZ22}, but, at the same time, abandon the context of ellipticity adopted therein towards more general shapes. Specifically, we consider goodness-of-fit tests for certain popular families of multivariate skewed distributions. In this connection, an extra element that needs to be addressed in implementing the tests compared to Chen et al. \cite{CJMZ22} is that, unlike the parameters-free tests proposed in that paper, in the current setting the presence of shape parameters necessitates an additional re-sampling cycle to replicate the empirical distribution of the test statistic for a given parameter configuration. In doing so we take advantage of the canonical form of the distributional family under test, whenever available. In Section \ref{sec2}, we revisit some of the main ideas of the work by Chen et al. \cite{CJMZ22} and the background for our tests shall be provided.

The remainder of this work unfolds as follows. Section \ref{sec3} specifies the actual implementation of the new test procedure using bootstrap re-sampling. In Section \ref{sec4} we introduce and provide a short review of the collection of families that shall be used for the simulations and goodness-of-fit tests and study their respective canonical forms. An extensive Monte Carlo study is presented to illustrate the finite-sample properties of the tests in Section \ref{sec5}. The paper concludes with several real-data applications in Section \ref{sec6} and a discussion of the overall results in Section \ref{sec7}. An online Supplement contains some extra Monte Carlo results.  
% #################
%    SECTION 2
% #################

\section{Characteristic function-based tests} \label{sec2}

Let $\bX \in \mathbb R^p$ ($p\geq 1$) be a random vector with an absolutely continuous distribution function $F_\bX$.  We are interested in  the composite goodness-of-fit testing problem represented by the null hypothesis
\be \label{null1} 
 {\cal{H}}_{0}: \mbox{the law of} \ \bX   \in {\cal{F}}_{\bvtheta}, \quad \mbox{for some} \ \bvtheta \in \Theta,
\ee
where ${\cal {F}}_{\bvtheta}=\{ F_{\bvtheta}, \, \bvtheta \in \Theta\}$ denotes a specific parametric family of distributions admitting a parameterization in terms of the parameter vector $\bvtheta$. The corresponding parameter space $\Theta$ will be taken as an open subset of $\mathbb R^q$ ($q\geq 1$). 
Given the uniqueness of CFs, we may equivalently state the null hypothesis in \eqref{null1} as
\be \label{null2} 
 \varphi_\bX(\bt)=\varphi_{\bvtheta}(\bt), \quad \forall \, \bt\in \mathbb R^p, \ \mbox{for some} \ \bvtheta \in \Theta,
\ee
where $\varphi_\bX(\bt)=\mathbb E(e^{{\rm{i}} \bt^\top  \bX})$ denotes the CF of $\bX$ and $\varphi_{\bvtheta}(\bt)$ corresponds to the CF of some random vector in the family ${\cal {F}}_{\bvtheta}$. Here ${\rm{i}}=\sqrt{-1}$ and $\top$ means transposition of vectors and matrices. 

A CF-based statistic for goodness-of-fit is typically formulated in terms  of  $\|\varphi_{n}-\varphi_{\widehat{\bvtheta}_n} \|_w^2$, where 
\begin{equation} \label{Hilb}
\|f- g \|^2_w:=\int_{\mathbb R^p} |f(\bt)-g(\bt)|^2 \, w(\bt) \, {\rm{d}}\bt
\end{equation}
is an L2-type weighted distance between the pair of complex-valued functions $(f,g)$,   
\be \label{ECF} 
\varphi_{n}(\bt)=\frac{1}{n} \sum_{j=1}^n e^{{\rm{i}} \bt^\top \bX_j}  
\ee
is the empirical CF computed from a collection $(\bX_1,\ldots,\bX_n)$ of independent and identically distributed (i.i.d.) copies of $\bX$, and $\varphi_{\widehat{\bvtheta}_n}$ is the CF corresponding to the null hypothesis ${\cal{H}}_0$ with the parameter $\bvtheta$ replaced by an estimator  $\widehat{\bvtheta}_n:=\widehat{\bvtheta}_n(\bX_1,\ldots,\bX_n)$. The weight function $w>0$ will be further specified below. 

There exist cases though of distributions, some of which will be considered herein, for which the null CF $\varphi_{\bvtheta}(\cdot)$ is either completely unknown or too complicated to work with. In such cases we suggest formulating a test statistic analogously but without direct reference to the CF of the distribution under test. Specifically, and in line with Chen et al. \cite{CJMZ22}, we suggest the test statistic 
\be \label{ts}
 T^{(w)}_{n,m} = \|\varphi_{n}-\widehat {\varphi}_{0,m} \|_w^2,
\ee 
where $\varphi_{n}(\cdot)$ is as in \eqref{ECF}, while 
\be 
\widehat {\varphi}_{0,m}(\bt)= \frac{1}{m} \sum_{j=1}^m e^{{\rm{i}} \bt^\top \bX_{0,j}}  
\ee
is an empirical CF computed from a sample $(\bX_{0,1},\ldots,\bX_{0,m})$ which is drawn from $F_{\widehat{\bvtheta}_n}$, i.e., from a sample of size $m$ ($m\geq n$) taken from the distribution under test with parameter estimated by a consistent estimator $\widehat{\bvtheta}_n:=\widehat{\bvtheta}_n(\bX_1,\ldots,\bX_n)$. In other words, $\widehat {\varphi}_{0,m}$ is a Monte Carlo approximation of the null CF $\varphi_\bvtheta(\cdot)$. Rejection is for large values of $T^{(w)}_{n,m}$.

A clear advantage of using the test statistic $T^{(w)}_{n,m}$ is its computational simplicity. To see this, write $|\cdot|$ for the modulus of a complex number and, thereafter, by using standard algebra, we obtain
\begin{eqnarray*} \nonumber
|\varphi_{n}(\bt)-\widehat {\varphi}_{0,m}(\bt)|^2&=&\frac{1}{n^2} \sum_{j,k=1}^n \cos \bt^\top (\bX_j-\bX_k)+\frac{1}{m^2} \sum_{j,k=1}^m \cos \bt^\top (\bX_{0,j}-\bX_{0,k})\\ \label{phi2}  &&- \frac{2}{nm} \sum_{j=1}^n \sum_{k=1}^m \cos \bt^\top (\bX_j-\bX_{0,k}).
\end{eqnarray*}
Now suppose that the weight function $w(\cdot)$, figuring in \eqref{Hilb} and \eqref{ts}, is chosen as the density of a random vector $\bW \in \mathbb R^p$
following a certain spherical distribution. Then it is well known that the CF of $\bW$ simplifies to $\varphi_{\bW}(\bt)=\mathbb E(\cos \bt^\top \bW)$ and it is eventually given by $\Psi(\|\bt\|^2)$, where $\Psi(\cdot)$ is called the ``kernel" associated with $\bW$ and $\| \cdot \|$ stands for the standard Euclidean norm in $\mathbb R^p$ (see Fang et al. \cite{FKN90}). By using the last equation and such a weight function $w(\cdot)$ in \eqref{ts}, we end up with the test statistic 
\begin{eqnarray} \label{psi} 
T^{(\Psi)}_{n,m}&=&\frac{1}{n^2} \sum_{j,k=1}^n \Psi(\|\bX_j-\bX_k\|^2) 
+\frac{1}{m^2} \sum_{j,k=1}^m \Psi( \|\bX_{0,j}-\bX_{0,k}\|^2)\\  \nonumber &&- \frac{2}{nm} \sum_{j=1}^n \sum_{k=1}^m \Psi(\|\bX_j-\bX_{0,k}\|^2),
\end{eqnarray}
where we have made the dependence of the test statistic on the kernel $\Psi$ explicit. Provided that the kernel $\Psi(\cdot)$ is simple enough, \eqref{psi} can be readily computed. Some prominent examples of simple kernels at our disposal are:
\begin{itemize}
    \item the standard normal kernel $\Psi(\xi)=e^{-\xi/2}$;
    
    \item the kernel $\Psi(\xi)=e^{- \xi^{b/2}}$, $b \in (0,2)$, that originates from the stable distributions (see Nolan \cite{N13});
    
    \item and the generalized Laplace kernel $\Psi(\xi)=(1+ \xi)^{-b}, \ b>0$ (see Kozubowski et al. \cite{KPR13}).
\end{itemize}
In the present work, we shall restrict our tests by making use only of the standard normal kernel. For a more in-depth discussion on kernels, we refer to Micchelli et al. \cite{MXZ06}.

% #################
%    SECTION 3
% #################

\section{Test implementation by re-sampling} \label{sec3}  
When some of the component parameters occurring in $\bvtheta$ can be standardized out, the asymptotic null distribution of the proposed test statistic $\widehat T^{(\Psi)}_{n,m}$ in \eqref{ts} does not depend on them. Such parameters are typically location and scatter parameters, while others, labeled as shape parameters, such as skewness and kurtosis, cannot usually be standardized out and, therefore, will ultimately affect the asymptotic null distribution of the test statistic (see Meintanis and Swanepoel \cite{MS07}). In such cases, we can decompose the parameter vector as $\bvtheta=(\btheta, \blambda)$, where $\btheta$ denotes the non-shape parameters and $\blambda$ denotes the part of $\bvtheta$ that contains only shape parameters. In the presence of a canonical form of the distribution under test (see Section \ref{sec4}), the asymptotic null distribution of the test statistic may be simulated by setting $\btheta=\btheta_0$, where $\btheta_0$ is some standard value of $\btheta$, and $\blambda$ is set equal to its value $\widetilde{\blambda}:=\widetilde{\blambda}(\blambda,\btheta)$ in the canonical form.  

In the following, we outline the re-sampling procedure used within this work to approximate the test statistic's asymptotic distribution under the null hypothesis and indicate how the test can be carried out in practice. For definiteness, and for fixed $(n,m,\Psi)$, we write the test statistic in \eqref{psi} as $T({\cal{X}}_n; {\cal{X}}_{0,m})$, where ${\cal{X}}_n=(\bX_1, \ldots, \bX_n)$ denotes the observed data and ${\cal{X}}_{0,m}=(\bX_{0,1},\ldots,\bX_{0,m})$ denotes the data generated from ${\cal {F}}_{\widehat{\bvtheta}_n}$, i.e.,  from the null distribution with estimated parameters. We consider two cases of null hypotheses, one ``composite" with all parameters being estimated, while the other will be labeled ``simple", although in this second case too, some, but not all, parameters are estimated. 

% #################
%  Subsection 3.1
% #################

\subsection{Simple null hypothesis}\label{subsec31}

Here we are interested in the goodness-of-fit testing problem associated with the simple null hypothesis  
\be \label{null3s} 
 {\cal{H}}^{\rm s}_{0}: \mbox{the law of} \ \bX  \in {\cal{F}}_{\btheta,\blambda}, \ \mbox{for a fixed} \ \blambda=\blambda_0, \mbox{ and for some} \ \btheta;
\ee
and alternative hypothesis
\be \label{alt3s} 
 {\cal{H}}^{\rm s}_{1}: \mbox{the law of} \ \bX  \not \in {\cal{F}}_{\btheta,\blambda}, \ \mbox{for a fixed} \ \blambda=\blambda_0, \mbox{ and any} \ \btheta. 
\ee

Although we labeled \eqref{null3s} as a simple null hypothesis, it should be pointed out that the parameter $\btheta$ is left unspecified in ${\cal{H}}^{\rm s}_{0}$, and that our test procedure incorporates an estimation step for this parameter. For this case, the computation of critical points is based on simple Monte Carlo sampling from the distribution figuring in the null hypothesis. The steps of this Monte Carlo run are as follows:
\begin{enumerate}[label=Step \arabic* -]
  \item Generate a random sample ${\cal{X}}_n = \{\bX_1, \ldots, \bX_n\}$ from ${\cal{F}}_{\btheta_0,\blambda_0}$, compute the estimate $\widehat {\btheta}_n$, and obtain the standardized sample $\widehat{\cal{X}}_n = (\widehat{\bX}_1, \ldots, \widehat{\bX}_n)$, where $\widehat {\bX_j}=\widehat {\bX_j}({\bX_j},\widehat {\btheta}_n), \ j \in \{1, \ldots, n\}$. 
  
  \item Generate a random sample $\X_{0,m} = \{\bX_{0,1}, \ldots, \bX_{0,m}\}$ from ${\cal{F}}_{\btheta_0,\blambda_0}$. 
  
  \item Compute the test statistic $T := T^{\Psi}(\widehat{\cal{X}}_n, \X_{0,m})$, according to \eqref{psi}.
  
  \item Repeat Steps 1-3 several times, say $M$, and obtain the set of test statistics $\{T_1, \ldots, T_M$\}. Then the critical point, say $\widehat c_\delta$, is defined as the $(1 - \delta)\%$ quantile of $(T_m, \ m=1,\ldots,M)$.
\end{enumerate}
  
Having obtained the empirical critical point, $\widehat c_\delta$ is used to compute the test's empirical powers. In this connection, we generate a random sample $\X_n = \{\bX_1, \ldots, \bX_n\}$ from any distribution belonging to the set of alternatives in the alternative hypothesis ${\cal{H}}^{\rm s}_{1}$, and perform Steps 1-3 above, thereby computing the test statistic $T$. We reject the null hypothesis ${\cal{H}}^{\rm s}_{0}$ if $T>\widehat c_\delta$. We repeat this procedure several times, say $L$, and obtain the empirical power rate as $L^{-1} \sum_{\ell=1}^L \mathds{1}_{T_{\ell} > \widehat c_\delta}$, where $T_\ell$ denotes the test statistic corresponding to the $\ell^{\rm th}$ sample, for $\ell \in \{1, \ldots, L\}$.

% #################
%   Subsection 3.2
% #################

\subsection{Composite null hypothesis}\label{subsec32}

Here we are interested in the (fully) composite goodness-of-fit testing problem whereby all distributional parameters are estimated from the observed data. For reasons of explicitness, we state the null hypothesis as
\be \label{null3c} 
 {\cal{H}}^{\rm c}_0: \mbox{the law of} \ \bX  \in {\cal{F}}_{\btheta,\blambda}, \quad \mbox{for some} \ (\btheta,\blambda);
\ee
as well as the alternative 
\be \label{alt3c} 
 {\cal{H}}^{\rm c}_1: \mbox{the law of} \ \bX  \not \in {\cal{F}}_{\btheta,\blambda}, \quad \mbox{for any} \ (\btheta,\blambda).
\ee

For this case, the re-sampling scheme is as follows. On the basis of ${\cal {X}}_n$, compute the estimator $\widehat {\bvtheta}_n=(\widehat{\btheta}_n, \widehat {\blambda}_n)$ of $\bvtheta$ and standardize the observations as $\widehat{\bX}_j = \widehat {\bX}_j(\bX_j,\widehat{\btheta}_n)$, $j \in \{1, \ldots, n\}$. Now generate a random sample $\widehat {\cal{X}}_{0,m}:=(\widehat \bX_{0,1},\ldots, \widehat \bX_{0,m})$ under the null hypothesis with $(\btheta, \blambda)$ set equal to $(\btheta_0, \widehat {\widetilde {\blambda}}_n)$, where $\widehat {\widetilde {\blambda}}_n=\widehat {\widetilde {\blambda}}_n(\widehat {\btheta}_n, \widehat {\blambda}_n)$ is the parameter estimate of $\blambda$ induced by the parametrization. Then the value of the original test statistic is computed according to \eqref{psi} as $T = T(\widehat{\cal{X}}_n; \widehat {\cal{X}}_{0,m})$, where $\widehat {\cal{X}}_{n}:=(\widehat \bX_{1},\ldots, \widehat \bX_{n})$. In turn, the critical point against which the value of $T$ will be compared is computed using a {\it{parametric bootstrap}} procedure, the steps of which are outlined below: 

\begin{enumerate}[label=Step \arabic*:]
 \item Generate a random sample ${\cal{X}}^*_{n}:=(\bX^*_{1}, \ldots, \bX^*_{n})$ under the null hypothesis with $(\btheta, \blambda)$ set equal to $(\btheta_0, \widehat {\widetilde {\blambda}}_n)$. 

 \item On the basis of ${\cal {X}}^*_n$, compute the estimator $\widehat {\bvtheta}^*_n=(\widehat {\btheta}^*_n, \widehat {\blambda}^*_n)$.  

 \item Standardize the components of ${\cal {X}}^*_n$ as $\widehat{\bX}^*_j = \widehat {\bX}^*_j(\bX^*_j,\widehat {\btheta}^*_n)$, $j \in \{1, \ldots, n\}$. 

 \item Generate a random sample $\widehat {\cal{X}}^*_{0,m}:=(\widehat \bX^*_{0,1},\ldots, \widehat \bX^*_{0,m})$ under the null hypothesis with $(\btheta, \blambda)$ set equal to $(\btheta_0, \widehat {\widetilde {\blambda}}^*_n)$, where $\widehat {\widetilde {\blambda}}^*_n=\widehat {\widetilde {\blambda}}^*(\widehat {\btheta}^*_n, \widehat {\blambda}^*_n)$ is the bootstrap parameter estimate of $\widehat {\widetilde {\blambda}}_n$. 

 \item Compute the value of the bootstrap test statistic by  \eqref{psi} as $T^* = T(\widehat{\cal{X}}^*_n; \widehat {\cal{X}}^*_{0,m})$, where $\widehat {\cal{X}}^*_{n}:=(\widehat \bX^*_{1},\ldots, \widehat \bX^*_{n})$. 
 
 \item Steps 1-5 are repeated several times, say $B$, and thereby we compute the $(1-\delta)\%$ quantile $c_\delta$, with $\delta \in (0,1)$, of the empirical distribution of $(T^*_{b}, b=1,\ldots, B)$ as the size-$\delta$ critical value of the test statistic. 

 \item Repeat Steps 1-6 several times, say $M$, and thereby obtain pairs of test statistics and corresponding bootstrap critical points $(T_m,c_{\delta,m})$, $m \in \{1,\ldots, M\}$.
 
 \item Compute the empirical rejection rate as $M^{-1} \sum_{m=1}^M \mathds{1}_{T_m > c_{\delta,m}}$.
\end{enumerate}

Because the above parametric bootstrap procedure is time-consuming, we adopt the {\it warp-speed bootstrap} method of Giacomini et al. \cite{GPW13}. Thus, rather than computing a critical value $c_{\delta,m}$ for each of the $M$ Monte Carlo samples, we produce a single critical value that is used for all Monte Carlo samples. To do so, we generate only one single bootstrap sample, i.e., with $B = 1$ on Step~6, for each of the $M$ Monte Carlo samples and compute the corresponding bootstrap test statistic, say $T_m^*$, from this single bootstrap sample. Then the warp-speed critical value, say $\widetilde c_\delta$, is computed from $(T_m^*, \ m \in \{1, \ldots, M\})$ analogously as in Step 6 above, and the empirical rejection rate is given by $M^{-1} \sum_{m=1}^M \mathds{1}_{T_m> \widetilde c_{\delta}}$.  

% #################
%    SECTION 4
% #################

\section{Families of skewed distributions} \label{sec4}
In this section, we consider a collection of five families of skewed distributions and exemplify how our method, described in Sections \ref{sec2} and \ref{sec3}, may be applied to perform goodness-of-fit tests with them. In the following Sections \ref{sec5} and \ref{sec6}, we shall use these five families of distributions, respectively, in simulation studies and applications to real data sets.

\subsection{Multivariate skew-normal distribution}
The multivariate skew-normal (SN) distribution may be conveniently defined by the CF (see Azzalini and Dalla Valle \cite{AD96})
\be
\varphi_\bvtheta(\bt)=2 e^{{\rm{i}}\bt^\top \bxi-\frac{1}{2} \bt^\top \bOmega \bt }\:\Phi\left({\rm{i}}\frac{\balpha^\top \bOmega \bt}{\sqrt{1+\balpha^\top \bOmega \balpha}}\right), 
\ee
where $\Phi(\cdot)$ is the standard normal cumulative distribution function, $\bvtheta:=(\bxi,\bOmega, \balpha)$ is the associated parameters vector, with $(\bxi,\balpha) \in \mathbb R^p \times \mathbb R^p$ being, respectively, the location and skewness parameters, and where $\bOmega \in \R^{p \times p}$ is a symmetric positive definite matrix. We shall write SN$_p(\bxi,\bOmega,\balpha)$ to denote this distribution, with $\balpha=\0$ rendering the $p$-variate normal distribution with mean $\bxi$ and covariance matrix equal to $\bOmega$. To apply our test, a consistent estimator of the parameters in $\bvtheta$ is required. There exist a variety of methods for estimating them, including maximum-likelihood (MLE) and moments-based estimation methods (see, e.g., Azzalini and Capitanio \cite{AC99}, Azzalini et al. \cite{AGS10}, and Flecher et al. \cite{FNA09}), as well as packages available for the same purpose (see Azzalini \cite{A22}). There also exist a few goodness-of-fit tests in this case, namely, the tests of Balakrishnan et al. \cite{BCS14}, Gonz\'alez-Estrada et al. \cite{GVA22}, Jim\'enez-Gamero and Kim \cite{JK15}, and Meintanis and Hl\'avka \cite{MH10} which will be discussed further down in the paper (see Subsection \ref{subsec54}).

It may be shown that if $\bX\sim {\rm{SN}}_p(\bxi,\bOmega,\balpha)$, then there exists a matrix $\bH \in \R^{p \times p}$ such that $\bH^\top (\bX-\bxi) \sim {\rm{SN}}_p(\0,\bI, \balpha^*)$, with $\0$ and $\bI$ denoting, respectively, the zero vector and identity matrix in the indicated dimension, and $\balpha^*= (\alpha^*, 0, \ldots, 0)^\top$, with $\alpha^* = (\balpha ^\top \bar{\bOmega} \balpha)^{1/2}$ and $\bOmega = \bm{\omega} \bar{\bOmega} \bm{\omega}$, where $\bm{\omega} = \mbox{diag}(\omega_1, \ldots, \omega_p)$ is a positive-definite scale matrix (see eq. (5.2) in Azzalini and Capitanio \cite{AC14}). In the literature (see, e.g., Azzalini and Capitanio \cite{AC14} and Capitanio \cite{C20}), the ${\rm{SN}}_p(\0,\bI, \balpha^*)$ distribution is also called the canonical form. Moreover, it was proved by Capitanio \cite{C20} that the choice of 
\be \label{matrixH}
\bH = \bOmega^{-1/2} \bQ, 
\ee
where $\bOmega^{-1/2}$ is the unique inverse matrix of the positive definite symmetric square root matrix of $\bOmega$, and $\bQ$ is obtained through the spectral decomposition $\bQ \bm{\Lambda} \bQ^\top = \bOmega^{-1/2} \bSig \bOmega^{-1/2}$, with $\bSig$ being the covariance matrix of $\bX$, leads to the conclusion that $\bH^\top (\bX-\bxi)$ follows a canonical skew-normal distribution. In this connection, write $\widehat \bvtheta_n = (\widehat {\bxi}_n,\widehat{\bOmega}_n, \widehat{\balpha}_n)$ for an estimator of $\bvtheta$, and consider the standardized observations $\widehat{\bX}_j = \widehat {\bH}_n^\top (\bX_j-\widehat {\bxi}_n), j \in \{1,\ldots,n\}$, where $\widehat{\bH}_n = \widehat{\bOmega}_n^{-1/2} \widehat{\bQ}_n$. Then the test figuring in \eqref{psi} is readily applied by replacing $\bX_j$ by $\widehat \bX_j$, and where the $\bX_{0,j}$ are drawn from a  SN distribution with parameters $(\bxi, \bOmega)=(\0,\bI)$ and $\balpha^*$, the latter being replaced by $\widehat{\balpha}_n^* = (\widehat{\alpha}_n^*, 0, \ldots, 0)^\top$, where $\widehat{\alpha}_n^* = (\widehat{\balpha}_n^\top \widehat{\bar{\bOmega}}_n \widehat {\balpha}_n)^{1/2}$. For obtaining the estimates of $\widehat{\bxi}_n$, $\widehat{\bH}_n$, and $\widehat{\alpha}_n^*$, we suggest the use of the \texttt{sn} \cite{A22} package within the \textsf{R} \cite{R22} software environment.

\subsection{Multivariate skew-t distribution}

The multivariate skew-t (ST) distribution is related to the multivariate skew-normal distribution through the stochastic equation $\bY = \bxi + \sqrt{\eta} \bX$, where $\bX$ has a multivariate skew-normal distribution, $\bX \sim {\rm{SN}}_p(\0,\bOmega,\balpha)$, and $\eta$ has an inverse-Gamma distribution with shape and scale parameters both equal to $\nu/2$, i.e., $\eta \sim {\rm IG}(\nu/2, \nu/2)$. It was shown by Kim and Genton \cite{KG11}, theorem 7, that the CF of $\bY$ is given by
\be
\varphi_\bvtheta(\bt) = \exp({\rm i} {\bt}^\top \bxi) [\psi_{T_p}(\bOmega^{1/2} \bt) + {\rm i} \tau^+ (\brho, \omega \bt)],
\ee
where 
\begin{align*}
   \brho & = \bOmega \balpha/ (1 + \balpha^\top \bOmega \balpha)^{1/2},\\
   \psi_{T_p}(\bt) &= \frac{\|\sqrt{\nu} \bt\|^{\nu/2}}{\Gamma(\nu/2) \, 2^{\nu/2 - 1}} \ K_{\nu/2} (\|\sqrt{\nu} \bt\|), \quad \mbox{for } \bt \in \R^p, \ \nu > 0,\\
   \tau^+(\brho, \omega \bt) &= \int _0^\infty \exp(-x \bt^\top \bOmega \bt /2) \, \tau (\sqrt{x} \brho^\top \omega \bt)\, \mbox{d} H(x), \quad \mbox{for } \brho^\top \omega \bt > 0,
\end{align*}
with $\tau^+(\brho, -\omega \bt) = - \tau^+(\brho, \omega \bt)$, $\tau(x) = \int_0^x \sqrt{2/\pi} \exp(u^2/2)\, {{\rm d}}u$, for $x >0$, with $\tau(-x) = -\tau(x)$, $H(x) = \Gamma(\nu/2, \nu/(2x))/\Gamma(\nu/2)$, for $x > 0$, denoting the cumulative distribution function of $\eta$, with $\Gamma(a) = \Gamma(a, 0)$ and where $\Gamma(a, b) = \int_b^\infty t^{a-1} e^{-t} \mbox{d}t$, for $b \geq 0$, represents the upper incomplete Gamma function, and $K_{\lambda}(\cdot)$ is the integral representation of the modified Bessel function of the third kind, defined as $K_\lambda(w) = \frac{1}{2} \int_0^\infty x^{\lambda - 1} \exp\left\{- \frac{w}{2} \left(x + \frac{1}{x}\right)\right\} \mbox{d} x$, for $w > 0$ and $\lambda \in \R$. Here $\bvtheta = (\bxi, \bOmega, \balpha, \nu)$ and we write $\bY \sim {\rm ST}_p(\bxi, \bOmega, \balpha, \nu).$

Since the multivariate skew-t distribution can be expressed as a scale mixture of a skew-normal distribution, proposition 2 in Capitanio \cite{C20} guarantees that by taking once again the matrix $\bH$ as defined in \eqref{matrixH}, any random vector $\bY \sim$ ST$_p(\bxi, \bOmega, \balpha, \nu)$ can be transformed into the canonical skew-t distribution $\bH^\top \bY \sim$ ST$_p(\0, \bI, \balpha_\bY^*, \nu)$. In particular, $\bxi$ and $\bOmega$, like in the skew-normal case, are nuisance parameters so that they can be dismissed for hypothesis testing after the standardization is performed. Additionally, for the simulation studies in Section \ref{sec5}, it will suffice to implement the tests for different choices of $p$, $\nu$, and $\alpha^*$ (the unique non-null component of $\balpha_\bY^* = (\alpha^*, 0, \ldots, 0)^\top$), substantially reducing the cases that need a proper investigation. Here the \texttt{sn} \cite{A22} package in \textsf{R} \cite{R22} can also be used to retrieve the desired multivariate skew-t's parameters estimates as it was also designed for this purpose. The hypothesis test is then carried out similarly to the skew-normal case discussed in the preceding subsection.

\subsection{Multivariate skew-Laplace distribution}

The multivariate skew-Laplace (SL) distribution may be conveniently defined by the CF (see Arslan \cite{A10})
\be \label{msl}
\varphi_\bvtheta(\bt) = \frac{e^{{{\rm i}}\bt^\top \bxi}}{(1 + \bt^\top \bOmega \, \bt - 2 {{\rm i}} \bt^\top \balpha)^{(p+1)/2}}, 
\ee
with $\bvtheta = (\bxi, \bOmega, \balpha)$, where $(\bxi,\balpha) \in \mathbb R^p \times \mathbb R^p$ are, respectively, location and skewness parameters, and $\bOmega$ is a symmetric positive definite matrix. We will use the notation SL$_p(\bxi,\bOmega,\balpha)$ for this distribution. Although the multivariate skew-Laplace distribution proposed by Arslan \cite{A10} has very similar properties to the distinct version introduced by Kotz et al. \cite{KKP01}, Arslan's alternative has a simpler probability density function, allowing for uncomplicated estimation methods of its parameters in the multivariate setting. Also, Arslan \cite{A10} proposed an efficient EM algorithm that can be used for the estimation of $\bxi, \bOmega,$ and $\balpha$.

Analogously to the skew-normal and skew-t distributions, provided that $\balpha \neq \0$, it may be shown that $\bX\sim {{\rm SL}}_p(\bxi,\bOmega,\balpha)$ can be reduced to a canonical form. This novel result brings down the burden of testing for nuisance parameters, as well as reducing the skewness to a singular one-dimensional component.

\begin{prop}[Canonical form - SL$_p$ distribution]\label{propcanLap}
Let $\bX \sim \mbox{SL}_p(\bxi, \bOmega, \balpha)$ with $\balpha \neq \0$ and consider the affine non-singular transform 
\begin{equation}
  \bX^* = \bH^\top (\bX - \bxi),
\end{equation}
with $\bH = \bOmega^{-1/2} \bQ$, where $\bOmega^{-1/2}$ denotes the inverse of the unique positive definite symmetric square root matrix of $\bOmega$, and $\bQ = [\bv_1 \ldots \bv_p]$ is the orthogonal matrix with $\bv_1 = \bOmega^{-1/2} \balpha/ \|\bOmega^{-1/2} \balpha\|$ as its first column vector and the remaining columns $\bv_2, \ldots, \bv_p$ belong to the orthogonal complement of $\bv_1$. Then $\bX^* \sim \mbox{SL}_p(\0, \bI, \balpha_\bX^*)$ with $\balpha_{\bX}^* = (\alpha^*, 0, \ldots, 0)^\top$ and $\alpha^* = {\|\bOmega^{-1/2} \balpha\|}$.
\end{prop}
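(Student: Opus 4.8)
The plan is to compute the characteristic function of $\bX^*$ directly and match it against the CF of the claimed canonical $\mathrm{SL}_p$ law, invoking the uniqueness theorem for characteristic functions. Since $\bX^* = \bH^\top(\bX-\bxi)$ is an affine transform of $\bX$, for every $\bt \in \R^p$ one has
\[
\varphi_{\bX^*}(\bt) = \E\bigl(e^{{\rm i}\bt^\top \bX^*}\bigr) = e^{-{\rm i}(\bH\bt)^\top \bxi}\,\varphi_\bvtheta(\bH\bt),
\]
using $\bt^\top \bH^\top = (\bH\bt)^\top$. It then remains to substitute the explicit expression \eqref{msl} for $\varphi_\bvtheta$, evaluated at $\bH\bt$, and to simplify its three ingredients: the location term $e^{{\rm i}(\bH\bt)^\top \bxi}$, the quadratic form $(\bH\bt)^\top \bOmega (\bH\bt)$, and the linear term $(\bH\bt)^\top \balpha$.

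First, the location exponential cancels, since the numerator of $\varphi_\bvtheta(\bH\bt)$ is exactly $e^{{\rm i}(\bH\bt)^\top \bxi}$, the reciprocal of the prefactor above. Next, using $\bH = \bOmega^{-1/2}\bQ$ together with the symmetry of $\bOmega^{-1/2}$ and the orthogonality of $\bQ$, I would compute
\[
\bH^\top \bOmega \bH = \bQ^\top \bOmega^{-1/2}\bOmega\bOmega^{-1/2}\bQ = \bQ^\top \bQ = \bI,
\]
so that $(\bH\bt)^\top \bOmega (\bH\bt) = \bt^\top \bt = \bt^\top \bI \bt$. For the linear term, $\bH^\top \balpha = \bQ^\top \bOmega^{-1/2}\balpha$, whose $j$-th entry is $\bv_j^\top \bOmega^{-1/2}\balpha$; since $\bOmega^{-1/2}\balpha = \|\bOmega^{-1/2}\balpha\|\,\bv_1$ by the definition of $\bv_1$, this entry equals $\|\bOmega^{-1/2}\balpha\| = \alpha^*$ for $j=1$ and vanishes for $j \geq 2$ because $\bv_j \perp \bv_1$; hence $\bH^\top \balpha = \balpha_\bX^*$. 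Combining the three simplifications gives
\[
\varphi_{\bX^*}(\bt) = \bigl(1 + \bt^\top \bI \bt - 2{\rm i}\,\bt^\top \balpha_\bX^*\bigr)^{-(p+1)/2},
\]
which is precisely the CF of $\mathrm{SL}_p(\0,\bI,\balpha_\bX^*)$ read off from \eqref{msl}, and the uniqueness theorem yields the claim.

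\textbf{Where the care is needed.} The computation is routine; the only points needing a word of care are the preliminary well-definedness checks. Because $\bOmega$ is symmetric positive definite, the positive definite symmetric square root of $\bOmega$ and its inverse $\bOmega^{-1/2}$ exist and are unique, so $\bOmega^{-1/2}\balpha$ is well-defined; because $\balpha \neq \0$ and $\bOmega^{-1/2}$ is non-singular, $\bOmega^{-1/2}\balpha \neq \0$, so $\bv_1$ is a genuine unit vector which can be completed to an orthonormal basis $\bv_1,\ldots,\bv_p$ by Gram--Schmidt, making $\bQ$ orthogonal; and $\bH = \bOmega^{-1/2}\bQ$, being a product of invertible matrices, is invertible, so the transform $\bX \mapsto \bX^*$ is genuinely affine and non-singular. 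With these in hand, the only thing left to watch is the bookkeeping of transposes in the quadratic and linear forms, and no substantive obstacle remains.
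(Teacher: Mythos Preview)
Your proof is correct and follows essentially the same route as the paper's: both reduce the problem to computing $\bH^\top \bOmega \bH = \bI$ and $\bH^\top \balpha = (\|\bOmega^{-1/2}\balpha\|,0,\ldots,0)^\top$, using the orthogonality of $\bQ$ and the defining property of $\bv_1$. The only cosmetic difference is that the paper invokes Arslan's affine-closure result (Proposition~3 in \cite{A10}) to pass from $\bX$ to $\bH^\top(\bX-\bxi)$, whereas you rederive that closure in one line directly from the CF \eqref{msl}; the paper is also slightly more explicit about the Gram--Schmidt construction of $\bQ$, which you correctly note in your well-definedness remarks.
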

\begin{proof}
  From proposition 3 in Arslan \cite{A10}, if $\bA \in \R^{p \times p}$ is any full rank matrix, it follows that $\bA (\bX - \bxi) \sim \mbox{SL}_p(\0, \bA \bOmega \bA^\top, \bA \balpha)$. Since $\balpha = (\alpha_1, \ldots, \alpha_p)^\top \neq \0$ and $\bOmega^{-1/2}$ is non-singular, we have $\bOmega^{-1/2} \balpha \neq \0$. Hence, there is at least one component $u_k$ in $\bu = (u_1, \ldots, u_p)^\top = \bOmega^{-1/2} \balpha$ that is non-null. Moreover, if $\bm{e}_j$ represents the $j$th canonical vector in $\R^p$, for $j \in \{1, \ldots, p\}$, then $\{\bu\} \cup \{\bm{e}_j \in \R^p : j \neq k\}$ is a basis of $\R^p$. For ease of reading, let us rename the vectors $\bu_1 = \bu$, $\bu_{j+1} = \bm{e}_j$, for $1 \leq j < k$, and $\bu_{j} = \bm{e}_j$, for $k < j \leq p$. Then we can apply the Gram-Schmidt process to find an orthonormal basis $\{\bv_1, \ldots, \bv_p\}$ of $\R^p$. For this, take $\tilde{\bv}_1 = \bu_1 = \bOmega^{-1/2} \balpha$ and $\tilde{\bv}_j = \bu_j - \sum_{i=1}^{j-1} \frac{\tilde{\bv}_i^\top \bu_j }{\tilde{\bv}_i^\top \tilde{\bv}_i} \tilde{\bv}_i$, for $2 \leq j \leq p$. The desired basis is obtained from the normalizations $\bv_j = \tilde{\bv}_j / \|\tilde{\bv}_j\|$, for $j \in \{1, \ldots, p\}$. Therefore, taking $\bQ = [\bv_1 \ldots \bv_p]$ and $\bH = \bOmega^{-1/2} \bQ$, it follows that $\bH^\top (\bX - \bxi) \sim \mbox{SL}_p(\0, \bH^\top \bOmega \bH, \bH^\top \balpha)$, with $\bH^\top \bOmega \bH = \bQ^\top \bOmega^{-1/2} \bOmega\, \bOmega^{-1/2} \bQ = \bI$ and $\bH^\top \balpha = \bQ^\top \bOmega^{-1/2} \balpha = (\|\bOmega^{-1/2} \balpha\|, 0, \ldots, 0)^\top$.
\end{proof}

Analogously to the SN and ST cases, the fact that a canonical form is available for the skew-Laplace distribution is useful in the implementation of the goodness-of-fit test due to a reduced number of parameters to be considered. Moreover, the proof of Proposition \ref{propcanLap} gives at the same time an algorithm to find the canonical form of a multivariate SL$_p$ distribution. Taking this in consideration, let $\widehat{\bX}_j = \widehat \bH_n^\top (\bX_j-\widehat {\bxi}_n)$, $j \in \{1,\ldots,n\}$, be the standardized observations. Then the test figuring in \eqref{psi} is applied by replacing $\bX_j$ by $\widehat{\bX}_j$, where $\bX_{0,j}$ are drawn from a  SL distribution with parameters $(\bxi, \bOmega)=(\0,\bI)$ and $\balpha$ is set equal to $(\|\widehat {\bOmega}_n^{-1/2} \widehat{\balpha}_n\|, 0, \ldots, 0)^\top$. To obtain the estimates of $\widehat{\bxi}_n$, $\widehat{\bOmega}_n$, and $\widehat{\balpha}_n$, we suggest the EM algorithm proposed by Arslan \cite{A10}.

\subsection{Multivariate Tukey \texorpdfstring{$\bg$-and-$\bh$}{g-and-h} distribution}

The multivariate Tukey $g$-and-$h$ distribution (GH) was first introduced by Field and Genton \cite{FG06} as a generalization of its univariate counterpart presented by Tukey in 1977. It has been gaining popularity due to its flexible marginal distributions, allowing for the fitting of skewed and heavy-tailed data sets from climate and environmental problems (see, e.g., Jeong et al. \cite{JYCG19}, Yan and Genton \cite{YG19}, and Yan et al. \cite{YJG20}). Given two parameter vectors $\bg = (g_1, \ldots, g_p)^\top \in \R^p$ and $\bh = (h_1, \ldots, h_p)^\top \in \R^p_+$, the random vector $\bY \in \R^p$ is said to have a standard multivariate Tukey $g$-and-$h$ distribution if it can be represented as 
\begin{equation}
  \bY = {\btau}_{\bg, \bh}(\bZ):= (\tau_{g_1, h_1}(Z_1), \ldots, \tau_{g_p, h_p}(Z_p))^\top,
\end{equation}
where $\bZ = (Z_1, \ldots, Z_p)^\top \sim {N}_p(\0, \bI)$ has a standard multivariate normal distribution and, for two given $g \in \R$ and $h \in \R_+$, the univariate function $\tau_{g, h}$ is defined as
\begin{equation}
  \tau_{g, h}(z) =
  \begin{cases}
    \displaystyle \left(\frac{\exp(g z) - 1}{g} \right) \exp \left(\frac{h z^2}{2}\right), & g \in \R \setminus \{0\},\\[3mm]
    \displaystyle z \exp \left(\frac{h z^2}{2}\right), & g = 0,
  \end{cases}
\end{equation}
for any $z \in \R$. The general multivariate Tukey $g$-and-$h$ distribution is then defined as \begin{equation}
  \bY = \bOmega \, \btau_{\bg, \bh}(\bZ) + \bxi,
\end{equation} 
where $\bOmega \in \R^{p \times p}$ and $\bxi \in \R^p$ are, respectively, a positive definite matrix and a location vector. Here $\bvtheta = (\bxi, \bOmega, \bg, \bh)$ and the nuisance parameters are $\bxi$ and $\bOmega$ as the can be standardised out. We shall use the notation GH$_p(\bxi, \bOmega, \bg, \bh)$. 

In contrast with the SN, ST, and SL distributions, the Tukey $g$-and-$h$ distribution does not have a known transformation that allows one to represent it in a reducible canonical form. Also, since the inverse of $\tau_{g,h}(\cdot)$ does not have a closed expression, classical estimation methods, such as the MLE method, rely on numerical approximations. Likewise, the CF, probability density function, and cumulative distribution function can only be computed numerically. Several different techniques have been proposed for the estimation and fitting of the univariate Tukey $g$-and-$h$ distribution (see the review paper by M\"{o}stel et al. \cite{MFPP21}). However, for the multivariate case, only a few methods are available. For instance, Field and Genton \cite{FG06} used multivariate quantiles for data fitting. He and Raghunathan \cite{HR12} assumed that $\bOmega$ is a diagonal matrix and, for this reason, they proposed an algorithm that uses quantiles from the univariate Tukey $g$-and-$h$.

In this work, we opted for the MLE method. For this purpose, we need to find the parameters $\bvtheta$ that maximize the log-likelihood function
\begin{equation}
  \ell(\bvtheta | \bY_1, \ldots, \bY_n) = - n \ln |\bOmega| + \sum_{i=1}^n \sum_{j=1}^p \left[\ln \left\{ \phi(u_i^{(j)})\right\} - \ln \left\{ \tau'_{g_j,h_j}(u_i^{(j)})\right\}\right],
\end{equation}
where $u_i^{(j)} = \tau^{-1}_{g_j, h_j} \left(\{\bOmega^{-1} (\bY_i - \bxi)\}^\top \bm{e}_j \right)$, for $i \in \{1, \ldots, n\}$ and $j \in \{1, \ldots, p\}$, $\bm{e}_j$ denotes the $j$th canonical vector in $\R^p$, $\phi(\cdot)$ is the pdf of a standard univariate normal distribution, and $\tau'_{g_j,h_j}(z)$ denotes the derivative of $\tau_{g_j,h_j}(z)$ concerning its argument $z$. To give an approximated value of the quantities $u_i^{(j)}$, we use the \texttt{uniroot} function available in the \texttt{stats} \cite{R22a} package in \textsf{R} \cite{R22}. For the maximization procedure, we use the \texttt{optim} function available in the \texttt{MASS} \cite{R22b} package in \textsf{R} \cite{R22}, together with the ``Nelder-Mead'' method.

\subsection{Multivariate \texorpdfstring{$\balpha$}{alpha}-stable distributions}

The $\alpha$-stable distributions (AS), similar to the multivariate skew-t and Tukey $g$-and-$h$ distributions, are another possible extension of the multivariate Gaussian distribution that comports skewness. However, while the multivariate skew-t family has finite second moments, the $\alpha$-stable distributions are regulated by a parameter $\alpha \in (0, 2]$, called the tail index, and it has only finite second moments if $\alpha = 2$, which reduces itself to the multivariate Gaussian case. Particularly, if $\bX$ has a multivariate $\alpha$-stable distribution, then $\E(\|\bX\|^s) < \infty$, if $0 < s < \alpha$, and $\E(\|\bX\|^s) = \infty$, if $s \geq \alpha$. Several parameterizations for the CF have been proposed in the literature. Here, for numerical reasons, we adopt the S$^0$ parameterization introduced in Abdul-Hamid and Nolan \cite{AN98}, for which the CF is given by
\begin{equation}\label{ch.f.}
 \varphi_\bvtheta(\bt) = \exp\left(-\int_{\Sp} \psi_\alpha (\bt^\top \bs)\, \Gamma(d\bs) +{\rm i} \bt^\top \bxi \right), \quad \bt \in \R^p,
 \end{equation}
 where
 \begin{equation}\label{psi0}
 \psi_\alpha(u) =
 \begin{cases}
  |u|^\alpha\Big(1+{\rm i}\,\mbox{sign}(u)\tan\left(\frac{\pi\alpha}{2}\right)\left(|u|^{1-\alpha}-1\right)\Big), & \alpha\neq 1,\\
  |u|~\Big(1+{\rm i}\frac{2}{\pi}\mbox{sign}(u) \ln|u|\Big), & \alpha=1.
  \end{cases}
 \end{equation}
In this case, we denote $\bX \sim \rS_{\alpha}^0(\Gamma, \bxi)$ to indicate that the random vector $\bX$ has an $\alpha$-stable distribution with finite spectral measure $\Gamma(\cdot)$, defined on the unitary sphere $\Sp := \{\bs \in \R^p : \|\bs\| = 1\}$, and shift vector $\bxi \in \R^p$. We note from \eqref{ch.f.} that the multivariate $\alpha$-stable distributions are a semi-parametric family, being completely defined by the triplet $(\alpha, \Gamma, \bxi)$ and belonging to the more general class of infinitely divisible distributions. Furthermore, $\bX - \bxi \sim \rS_\alpha^0(\Gamma, \0)$, so that $\bxi$ is a nuisance parameter vector and it coincides with the mean vector when $1 < \alpha \leq 2$. 

Byczkowski et al. \cite{BNR93} proved that an approximation can be given for the spectral measure $\Gamma$, being useful for numerical computations and simulations. This approximation is described as follows: consider a finite partition $P = \{A_1, \ldots, A_m\}$ of $\Sp$ and a set of points $S = \{\bs_1, \ldots, \bs_m\} \subseteq \Sp$. Then, by setting $\gamma_i = \Gamma(A_i)$, the discrete spectral measure associated to $(P, S, \Gamma)$ is defined as
 \begin{equation} \label{dmeasure}
 \Gamma^*(\cdot) = \sum_{i=1}^m \Gamma(A_i) \, \mathds{1}_{\bs_i}(\cdot) = \sum_{i=1}^m \gamma_i \, \mathds{1}_{\bs_i}(\cdot),
 \end{equation}
for any Borelian in $\Sp$. Here $\Gamma^*$ is implicitly defined by concentrating the mass of $\Gamma(A_i)$ at each point $\bs_i$, the only requirement being that $\sup_{\bs \in A_i} |\bs - \bs_i|$ is sufficiently small, for each $i = 1, \ldots, m$. Hence, the discrete spectral measure $\Gamma^*$ can be used in practice instead of its continuous counterpart $\Gamma$. For simulating multivariate $\alpha$-stable random vectors, we shall use the result from Modarres and Nolan \cite{MN94} which states that, if $\bX \sim \rS_\alpha^0(\Gamma^*, \bxi)$, with $\Gamma^*(\cdot)$ defined in \eqref{dmeasure}, then
\begin{equation}\label{gen}
 \bX =
 \begin{cases}
 \sum_{i=1}^m \gamma_i^{1/\alpha} Z_i \bs_i + \tilde{\bxi}, & \alpha \neq 1,\\
 \sum_{i=1}^m \gamma_i(Z_i+ \frac{2}{\pi} \ln(\gamma_i)) \bs_i + \tilde{\bxi},& \alpha=1,
 \end{cases}
 \end{equation}
where $\tilde{\bxi} = \bxi - \tan\left(\frac{\pi \alpha}{2}\right) \sum_{i=1}^m \gamma_i \bs_i$ and $Z_1, \ldots, Z_m$ are i.i.d.\ one-dimensional $\alpha$-stable random variables with $Z_i \sim \rS_\alpha(1,1,0)$ (i.e., scale = skewness = 1 and location = 0). As for the estimation part, we use the projection method proposed by Nolan et al. \cite{NPM01}, which relies on the projections of the multivariate samples into a specifically chosen grid of values from the unitary sphere.

As it is well known, $\bX \sim \rS_\alpha^0(\Gamma, \bxi)$ is symmetric if and only if $\Gamma(\cdot)$ is symmetric on $\Sp$. So far, no tests available in the literature have been designed specifically for the general asymmetric multivariate $\alpha$-stable distributions, except for the one presented by Meintanis et al. (2015) which covers only the symmetric case. Here we propose a test that can be used for both symmetric and asymmetric cases. Although it is not usual in the literature, we use the notation $\bX \sim {\rm AS}_p(\bxi, \Gamma, \alpha)$ to indicate that $\bX$ has an asymmetric $\alpha$-stable distribution. For a more substantial review of these and further technical details concerning multivariate $\alpha$-stable distributions, we suggest reading Karling et al. \cite{KLS23} and Samorodnitsky and Taqqu \cite{ST00}.

% ###############
%    SECTION 5
% ###############

\section{Simulation studies} \label{sec5}

In this section, we present the results of simulation studies that were produced using the tests described in Sections \ref{sec2} and \ref{sec3} with the five families of skewed distributions introduced in Section \ref{sec4}. For these simulations, we used $M = L = 1000$ as a standard value in the steps described in Subsections \ref{subsec31} and \ref{subsec32}. Firstly, we start by calculating the empirically estimated sizes of the tests for each family under a $\delta=0.05$ designed nominal level. Then we calculate the power of the test in two distinct situations, the simple hypothesis case, and the composite hypothesis case, respectively, within a second and third round of simulations. In the latter, we test the five families of distribution against the family of sinh-arcsinh distributions (see Jones and Pewsey \cite{JP09}). It is worth pointing out that, as the sample size $n$ increases, naturally, the tests require more computational time to run. Also, the efficiency of the test is prone to the number of parameters present in each family and the method used for their estimation. Finally, we close this section with a comparison between our test and a few competitors for the skew-normal family that was already available in the literature.

\subsection{Estimated sizes} \label{subsec51}

The values presented in Table \ref{tab1} were generated with the test described in Section \ref{sec3} corresponding to the composite null hypothesis case, for $\delta=0.05$ and $M=1000$ replications. The dimension considered for the samples in the tests is $p = 2$. An analogous table for $p=3$ can be found in the Supplement. For simplifying the simulations, in each case, we fixed the value of $m$ to be equal to $n$, the sample size, with $n \in \{100, 250, 500, 750, 1000\}$. In a later section, and to have a better understanding of the effect of the size $m$ of the artificial sample, we present simulation results where we fix the sample size $n$ and let $m$ vary; see Simulation \ref{sim12} in Subsection \ref{subsec54}. 

\begin{simul}\label{sim1}
  We simulated $n$ observations from an SN$_p(\0, \bI, \balpha)$ distribution, with $\balpha = (3, 0, \ldots, 0)^\top$. We calculated the empirical sizes of the test and the results are presented in Table \ref{tab1}. We notice from this table that, as the sample size increases, the empirical sizes of the test stabilize around $0.05$, which corresponds to the designed nominal level.
\end{simul}

\begin{simul}\label{sim2}
  Next, we simulated $n$ observations from an ST$_p(\0, \bI, \balpha, \nu)$ distribution, with $\balpha = (3, 0, \ldots, 0)^\top$ and $\nu = 5$. The estimated sizes of the test are presented in Table \ref{tab1}. Comparing this case with the one in Simulation \ref{sim1}, we note that, for small values of $n$ $(100, 250, 500)$, the estimated sizes are not that close to $0.05$ as the ones observed in the SN case, but they start to converge to the designed nominal level as we increase the value of $n$, showing consistency.
\end{simul}

\begin{simul}\label{sim3}
  Here, we simulated $n$ observations from an SL$_p(\0, \bI, \balpha)$ distribution, with $\balpha = (3, 0, \ldots, 0)^\top$. The estimated sizes of the test are presented in Table \ref{tab1}. Here similar results to the two preceding simulations can be observed; as the sample size increases, the estimated sizes of the test converge to the designed nominal level.
\end{simul}

\begin{simul}\label{sim4}
  Next, we simulated $n$ observations from a GH$_p(\0, \bI, \bg, \bh)$ distribution, with $\bg = (1, \ldots, 1)^\top$ and $\bh = (0.5, \ldots, 0.5)^\top$. The estimated sizes of the test are presented in Table \ref{tab1}. Since all marginal components of $\bh$ are equal to $0.5$, we are in a situation when the variance is not finite and heavier tails than the SN, ST, and SL cases are observed. This, in particular, is reflected in the estimated sizes of the test. We observe that for $n = 1000$, the rejection rate is equal to $0.071$, which is relatively high. This might be explained due to the wide range dispersion of the observed data sets over the tails. Another case, similar to this one, is shown in the next simulation.
\end{simul}

\begin{simul}\label{sim5}
  Finally, we simulated $n$ observations from an AS$_2(\0, \Gamma, \alpha)$ distribution, with discrete spectral measure $\Gamma(\cdot) = (1/3) \sum_{k = 1}^3 \mathds{1}_{\bs_k}(\cdot)$, where $\bs_k = (\cos(2\pi k/3), \sin(2\pi k/3))^\top$, for $k \in \{1, 2, 3\}$, and stability index $\alpha = 1.5$. For the estimation procedure, we used a grid size of $N = 24$ projections (see Nolan et al. \cite{NPM01}). The estimated sizes of the test are presented in Table \ref{tab1}. Here, like in Simulation~\ref{sim4}, the observations originate from a distribution with infinite variance and heavy tails. However, for this case,  as we observed an increasing estimated size of the test when we raised the sample size $n$ to $1000$, we generated two extra rounds of simulations with $n = 2500$ and  $n = 5000$ to ensure that it was not diverging from the designed nominal level. The rejection rates obtained for these cases were, respectively, $0.066$ and $0.070$.
\end{simul}

\begin{table}[h] 
    \centering
    \caption{Estimated sizes of the tests correspondent to Simulations \ref{sim1} - \ref{sim5}. \label{tab1}}
    \begin{tabular}{cccccc}
    \hline
         & $n=100$ & $n=250$ & $n=500$ & $n=750$ & $n=1000$\\[0.5mm]
    \hline   
        SN& 0.045 & 0.047 & 0.049 & 0.045 & 0.053\\[2mm]
    %\hline
        ST & 0.063 & 0.054 & 0.062 & 0.052 & 0.057\\[2mm]
    %\hline      
        SL & 0.063 & 0.045 & 0.048 & 0.055 & 0.046\\[2mm]
    %\hline        
        GH & 0.063 & 0.046 & 0.057 & 0.063 & 0.071\\[2mm]
    %\hline
        AS & 0.051 & 0.064 & 0.063 & 0.076 & 0.080\\
    \hline
    \end{tabular}
\end{table}

\subsection{Estimated power functions for the simple null hypothesis case} \label{subsec52}

Consider the goodness-of-fit problem with simple null hypotheses ${\cal{H}}^{\rm s}_0$ as given in \eqref{null3s} and alternative hypotheses ${\cal{H}}^{\rm s}_1$ as in \eqref{alt3s}. %Given a random vector $\bX$, the size and power of the test can be defined as follows. For this, take into account the level of significance $\delta \in [0,1]$, defined as
%\begin{equation}
%  \delta = \Po [\mbox{``type I error''}] = \Po [T^\Psi(\bX, \bX_0) > c_\delta \, | \, {\cal{H}}^{\rm s}_{0} \mbox{ is true}],
%\end{equation}
%where $\bX_0$ is under ${\cal{H}}^{\rm s}_{0}$ and $c_\delta$ is the boundary of the critical region $[0, c_\delta)$. Thus, the power function of the test is defined as
%\begin{align}\label{powerSN}
%  \beta(\blambda) = \Po\big[T^\Psi(\bX, \bX_0) \leq c_\delta |\, \bX \sim {\cal{F}}_{\btheta,\blambda}\big].
%\end{align}
%In this context, $\beta(\blambda_0)$ is called the size of the test, whereas $\beta(\blambda)$ is said to be the power of the test against the alternative $\blambda$, for any $\blambda \neq \blambda_0$.
In the next five simulation runs (\ref{sim6}-\ref{sim9}), we calculate the empirical power functions of the test for a few cases of the GH, SL, SN, ST, and AS distributions. For the four first simulations, we considered the sample sizes of $n \in \{100, 250, 500, 750, 1000\}$ and $p \in \{2, 3\}$ for the dimension of the generated observations. The results are summarized and illustrated in Figure \ref{Fig1}.

\begin{simul}\label{sim6}
 We generated $n$ observations from a GH$_p(\0, \bI, \bg, \bh)$ distribution. For the null hypothesis, we take $\blambda_0 = (\bg_0, \bh_0)$, with $\bg_0 = (2, \ldots, 2)^\top$ and $\bh_0 = (1, \ldots, 1)^\top$. The power functions were calculated for $\blambda = (\bg, \bh)$ with $\bh = (h, \ldots, h)^\top$, for $h \in \{0.2, 0.4, 0.6, 0.8\}$, and $\bg = 2 \bh$, so that they only depend on the choice of $h$. The results are plotted in Figure \ref{Fig1}, items (a) and (b), respectively, for $p = 2$ and $p = 3$. A quick overview of the plotted functions suggests the obvious, as the sample size increases, the power also increases.  Moreover, as $h$ approaches $1$, the power functions converge to the size of the test as theoretically expected.
\end{simul}

\begin{simul}\label{sim7}
 Next, we generated $n$ observations from an SL$_p(\0, \bI, \balpha)$ distribution. For the null hypothesis, we take $\blambda_0 = \balpha_0 = (\alpha_0^*, 0, \ldots, 0)^\top$ with $\alpha_0^* = 3$. Then we calculated the power functions for $\blambda = \balpha = (\alpha^*, 0, \ldots, 0)^\top$ with $\alpha^* \in \{0, 1, 2, 5, 8, 13\}$ and the results are plotted in Figure \ref{Fig1}, items (c) and (d), respectively, for $p = 2$ and $p = 3$. We notice from the steepness present in the graphs that, for both dimensional cases, the power function is very sensitive to slight changes in $\alpha^*$, producing more power as its argument increases or decreases. Additionally, one can notice a small asymmetry on its graphs about $\alpha^* = 3$, precisely where the size of the test is located.
\end{simul}

\begin{simul}\label{sim8}
 Here, we generated $n$ observations from an SN$_p(\0, \bI, \balpha)$ distribution. For the null hypothesis, we considered $\blambda_0 = \balpha_0 = (\alpha_0^*, 0, \ldots, 0)^\top$ with $\alpha_0^*  = 3$, and $\blambda = \balpha = (\alpha^*, 0, \ldots, 0)^\top$ with $\alpha^* \in \{0, 1, 2, 5, 8, 13\}$ for calculating the empirical power functions. The resulting functions are plotted in Figure \ref{Fig1}, items (e) and (f), respectively, for $p = 2$ and $p = 3$. As one can notice from the graphs, the empirically estimated size of the test acts as an inflection point and the power functions are asymmetrically higher when $\alpha^* < 3$. This behavior might be due to the asymmetry of the distributions. The closer $\alpha^*$ is to $0$, the closest the distribution becomes to the multivariate normal distribution. The test has not much power when $\alpha^*$ increases, showing less steepness in that direction.
\end{simul}

\begin{figure}[ht!]
    \centering
    \includegraphics[scale=0.57]{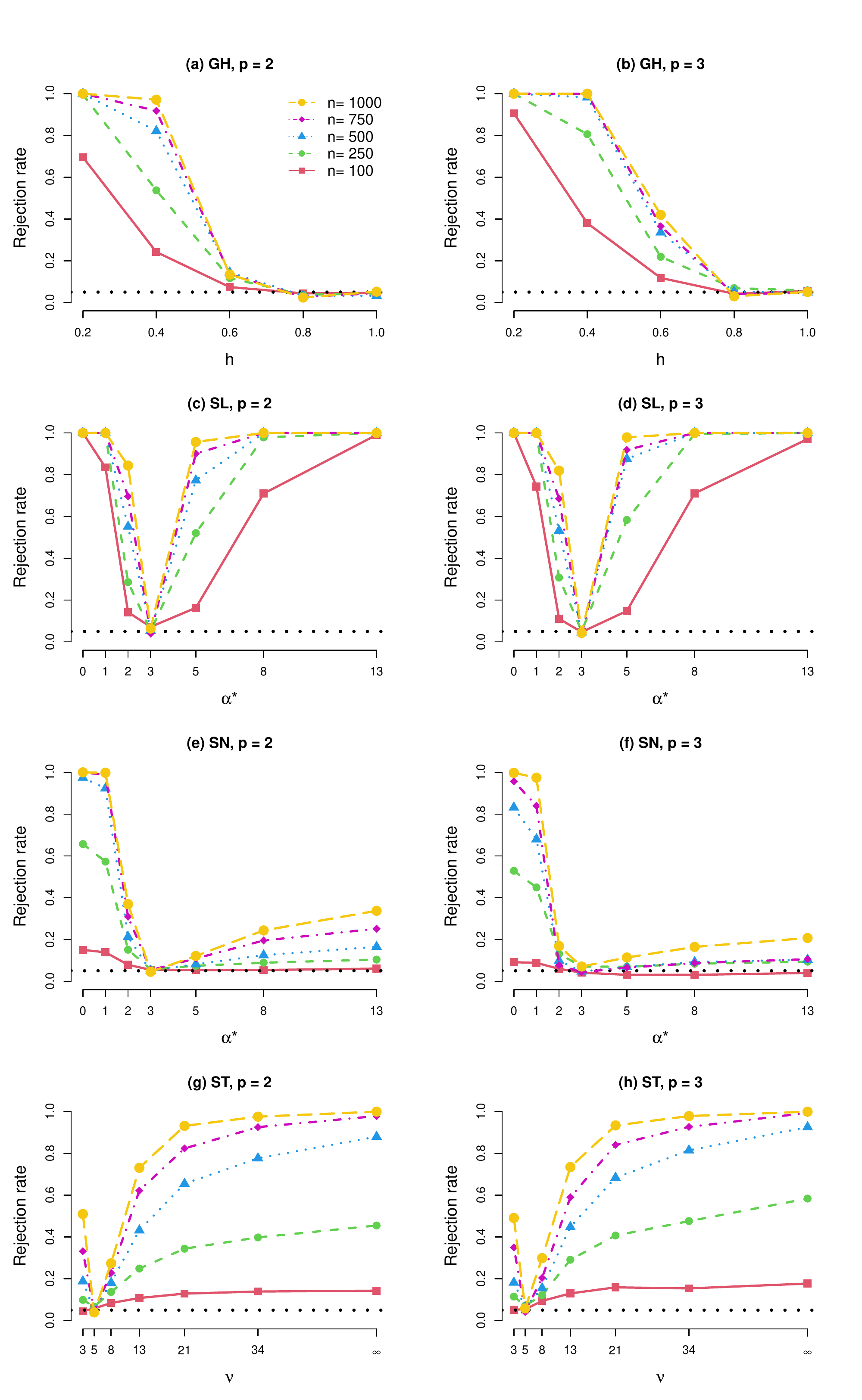}
    \caption{Power functions generated throughout Simulations \ref{sim6}-\ref{sim9}: (a),(b) Simulation \ref{sim6} $\to$ GH$_p(\0, \bI, \bg, \bh)$; (c),(d) Simulation \ref{sim7} $\to$ SL$_p(\0, \bI, \balpha)$; (e),(f) Simulation \ref{sim8} $\to$ SN$_p(\0, \bI, \balpha)$; (g);(h) Simulation \ref{sim9} $\to$ ST$_p(\0, \bI, \balpha, \nu)$. The horizontal dotted line corresponds to the $5 \%$ significance level.} \label{Fig1}
\end{figure}

\begin{simul}\label{sim9}
 Finally, we generated $n$ observations from an ST$_p(\0, \bI, \balpha, \nu)$ distribution with $\balpha = (3, 0, \ldots, 0)^\top$ fixed. We considered $\nu_0 = 5$ for the null hypothesis and $\nu \in \{3, 8, 13, 21, 34, \infty\}$ are used for calculating the empirical power functions, plotted in Figure \ref{Fig1}, items (g) and (h), respectively, for $p = 2$ and $p = 3$. By visualizing these figures, one can notice that as $\nu$ increases the power functions rapidly increase to $1$ for large sample sizes and the size of the test is attained at $\nu = 5$. Here $\nu = \infty$ is interpreted as the asymptotic distribution when $\nu \to \infty$.
\end{simul}

For the next simulation, consider $n \in \{250, 500, 750, 1000, 2500\}$ and $p = 2$.

\begin{simul}\label{sim10}
 We generated $n$ observations from an AS$_2(\0, \Gamma, \alpha)$ distribution. For the null hypothesis, we take the spectral measure $\Gamma_0(\cdot) = (1/3) \sum_{k = 1}^3 \mathds{1}_{\bs_k}(\cdot)$, where $\bs_k = (\cos(2\pi k/3), \sin(2\pi k/3))^\top$, for $k \in \{1, 2, 3\}$, and $\alpha_0 = 1.5$. Then we take two distinct sets for the alternative hypotheses. In the first set, we fix the stability index $\alpha = \alpha_0$ and calculate the power functions for the alternative spectral measures $\Gamma_q(\cdot) = (1/q) \sum_{k = 1}^q \mathds{1}_{\bs_k}(\cdot)$, with $\bs_k = (\cos(2\pi k/q), \sin(2\pi k/q))^\top$, for $k \in \{1, \ldots, q\}$ and $q \in \{34, 21, 13, 8, 5\}$. The resulting estimated power functions are plotted in Figure \ref{Fig2} (a). In the second set, we fix the spectral measure as $\Gamma(\cdot) = (1/3) \sum_{k = 1}^3 \mathds{1}_{\bs_k}(\cdot)$ and vary the parameter $\alpha$ instead, for $\alpha \in \{1.1, 1.3, 1.7, 1.9\}$. The resulting estimated power functions for this case are plotted in Figure~\ref{Fig2}~(b). In both figures, we also plotted the estimated size of the test with $\alpha = \alpha_0$ and $\Gamma(\cdot) \equiv \Gamma_3(\cdot).$
\end{simul}

\begin{figure}[h!]
    \centering
    \includegraphics[scale=0.4]{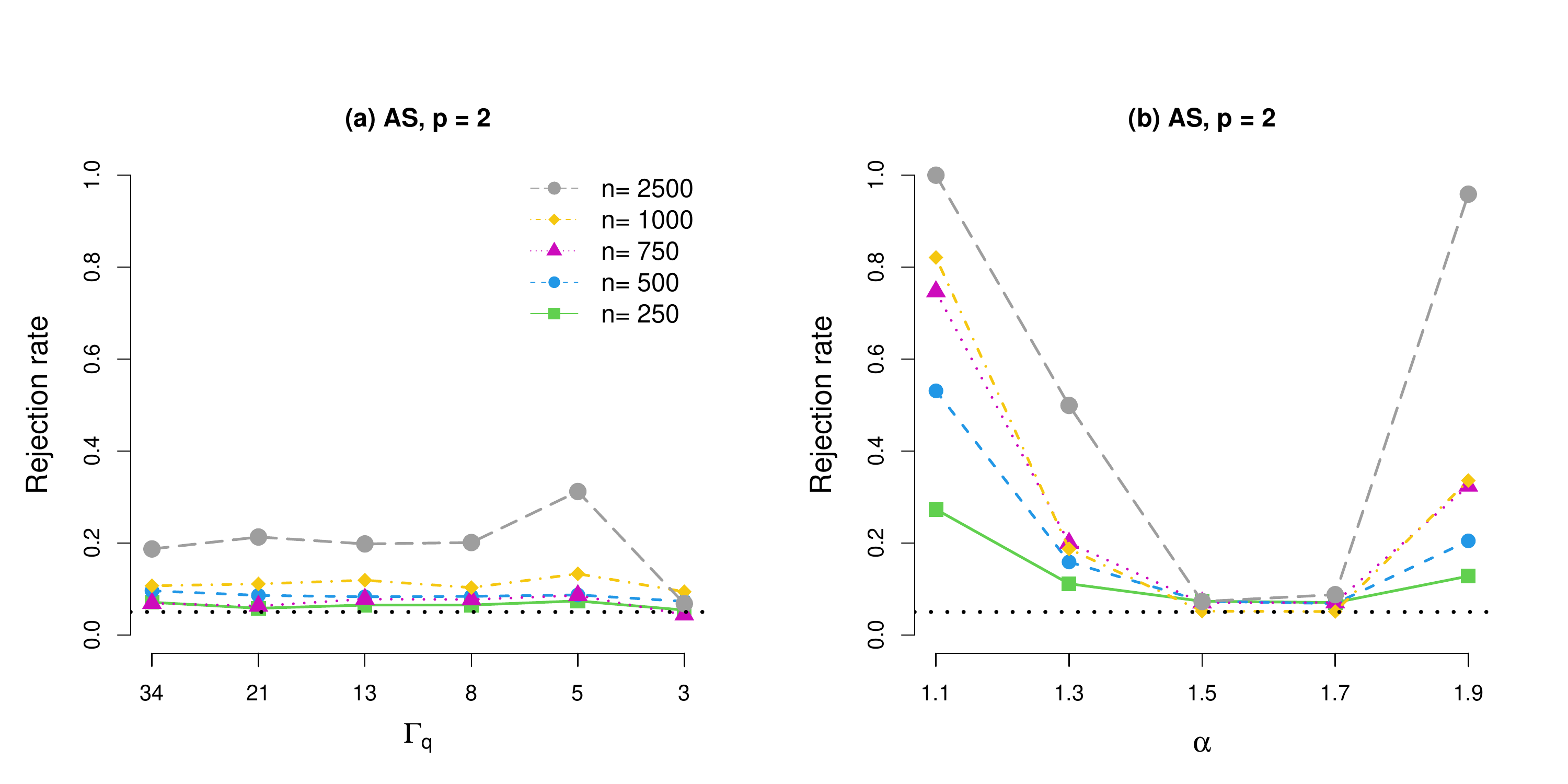}
    \caption{Power functions of the test when the null hypothesis is an AS$_2(\0, \Gamma_0, \alpha_0)$ distribution, with $\alpha_0 = 1.5$ and $\Gamma_0(\cdot) = (1/3) \sum_{k = 1}^3 \mathds{1}_{\bs_k}(\cdot)$, where $\bs_k = (\cos(2\pi k/3), \sin(2\pi k/3))^\top$, for $k \in \{1, 2, 3\}$, corresponding to Simulation \ref{sim10}. On the left-hand side figure, the indexes in the abscissa correspond to the alternative spectral measures $\Gamma_q(\cdot) = (1/q) \sum_{k = 1}^q \mathds{1}_{\bs_k}(\cdot)$, where $\bs_k = (\cos(2\pi k/q), \sin(2\pi k/q))^\top$, for $k \in \{1, \ldots, q\}$ and $q \in \{34, 21, 13, 8, 5\}$. On the right-hand side, we plotted the power of the test when, now, $\Gamma(\cdot)$ is fixed, but the alternatives hypotheses are for $\alpha \in \{1.1, 1.3, 1.7, 1.9\}$.} \label{Fig2}
\end{figure}

\subsection{Estimated power functions for the composite null hypothesis case} \label{subsec53}

For the next simulations, consider the goodness-of-fit testing problem with composite null hypotheses ${\cal{H}}^{\rm c}_0$ as stated in \eqref{null3c} and alternative hypotheses ${\cal{H}}^{\rm c}_1$ as in \eqref{alt3c}. 

\begin{simul}\label{sim11}
 We generated $n$ observations from an ST$_p(\0,\bI,\balpha, \nu)$ distribution with $p \in \{2,3\}$, a fixed value for $\balpha = (3, 0, \ldots, 0)^\top$, and $\nu \in \{1, 2, 3, 5, 8, 13, \infty\}$. For the null hypothesis, we considered the family of skew-normal distributions. Hence the generated observations belong to the set of alternative hypotheses. Then we calculated the empirical power functions and the results are plotted in Figure \ref{Fig3}. We notice that, as $\nu \to \infty$, the ST$_p(\0,\bI,\balpha, \nu)$ distribution converges to the SN$_p(\0,\bI,\balpha)$ distribution. In particular, this effect is also observed in the power functions, with convergence to the significance level, here set equal to $5 \%$. 
\end{simul}

\begin{figure}[h!]
    \centering
    \includegraphics[scale=0.4]{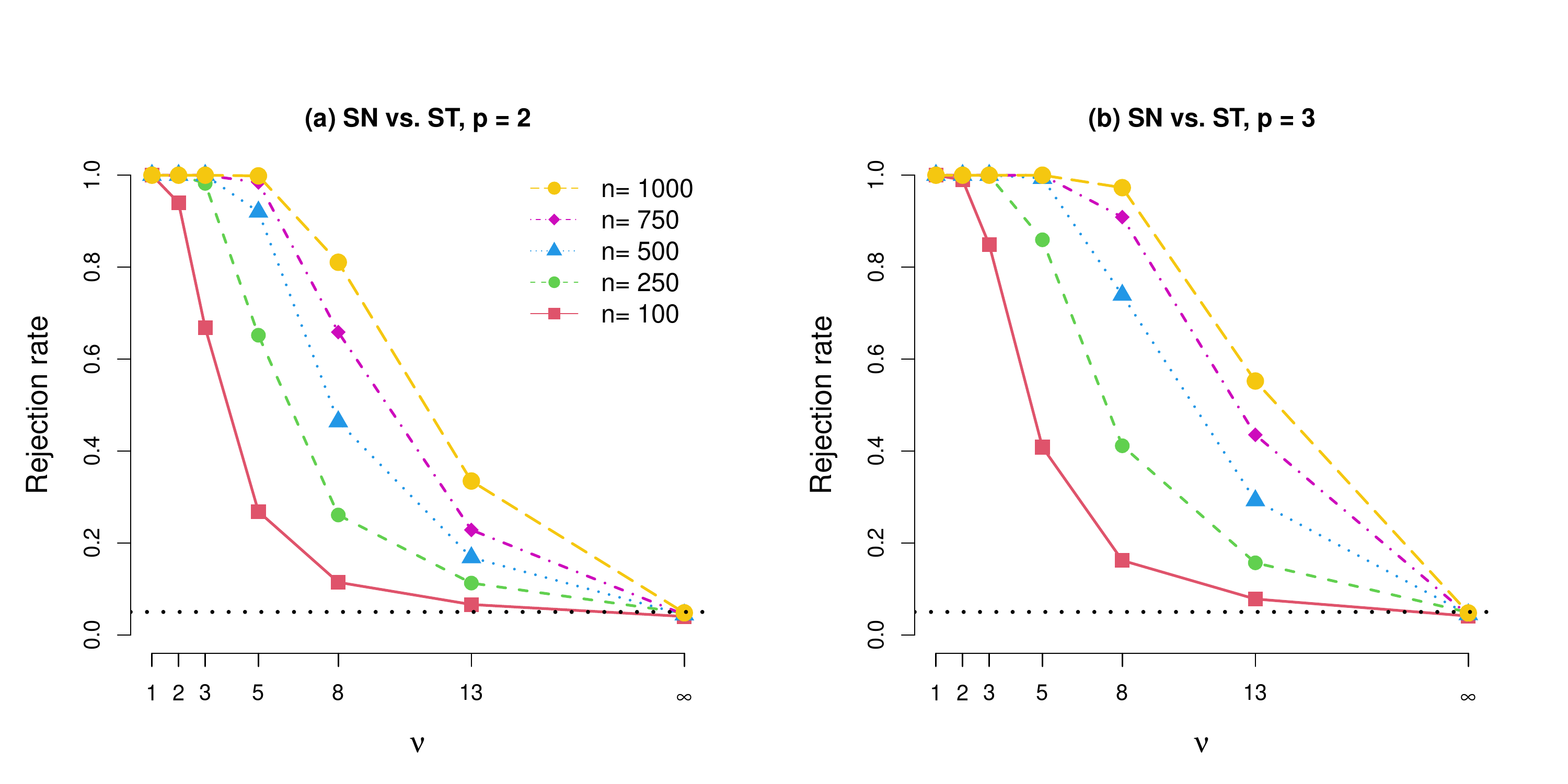}
    \caption{Power of the test when the null hypothesis is assumed to be an SN$_p(\0, \bI, \balpha)$ distribution, for some $\balpha = (\alpha^*, 0, \ldots,0)^\top$ with $\alpha^* \in [0, \infty)$, against the alternative family of ST$_p(\0, \bI, \balpha, \nu)$ distributions, with $p \in \{2,3\}$, $\balpha = (3, 0, \ldots, 0)^\top$ fixed, and $\nu \in \{1, 2, 3, 5, 8, 13, \infty\}$. The horizontal dotted line corresponds to the $5 \%$ significance level.} \label{Fig3}
\end{figure}

\begin{simul}\label{sim12}
 Now consider the family of multivariate sinh-arcsinh distributions introduced by Jones and Pewsey \cite{JP09}. Our aim with this simulation is to compute the power functions by considering the sinh-arcsinh as the alternative hypotheses of our test, and as the null hypothesis we shall consider the five families of distribution presented in Section \ref{sec4}. To define the multivariate sinh-arcsinh distribution, let us consider the univariate transformation 
 \be \label{Sab}
    {\rm S}_{a,b} (z) =  \sinh\{b \sinh^{-1} (z) - a\}, \quad a \in \R,\ b \in \R_+. 
  \ee
 Also, let $(\bm{e},\bm{f}) \in \R^p \times \R^p_+$, with $\bm{e} = (e_1,\ldots, e_p)^\top$ and $\bm{f} = (f_1, \ldots, f_p)^\top$, and $\bZ \sim {\rm N}_p(\0, \bI)$ follow a standard $p$-variate Gaussian distribution. Then, applying \eqref{Sab} component-wise, we say that
 \be
    \bY_{\bm{e}, \bm{f}} = \bm{\rm S}_{-\bm{e}/\bm{f}, 1/\bm{f}} (\bZ) := \left({\rm S}_{-\frac{e_1}{f_1}, \frac{1}{f_1}} (Z_1), \ldots, {\rm S}_{-\frac{e_p}{f_p}, \frac{1}{f_p}} (Z_p)\right)^\top
 \ee
 has a sinh-arcsinh distribution with parameters $(\bm{e},\bm{f})$. We simulated $n$ random samples from $\bY_{\bm{e}, \bm{f}}$ when the parameter $\bm{e} = (e, \ldots, e)^\top$, with $e  \in \{0, 0.1, 0.2, 0.3, 0.4, 0.5\}$, and $\bm{f} = (f, \ldots, f)^\top$, with $f = (e+1)^{-1}$, so that $1/f \in \{1, 1.1, 1.2, 1.3, 1.4, 1.5\}$ and $-e/f \in \{0, -1/11, -1/6, -3/13, -2/7, -1/3\}$. In particular, when $e = 0$, $\bY_{\bm{e}, \bm{f}}$ has a standard Gaussian distribution. Moreover, as $e$ increases, its distribution departs rapidly from the standard Gaussian distribution and turns out to be positively skewed on each axis concerning the origin. For the generation of these random samples, we considered two settings. In the first setting, we take $n \in \{50, 100, 250, 500\}$ and $m = n$ in our tests. While in the second set, we fixed the size of the generated samples, with $n = 100$, and shifted $m$ along the set $\{100, 250, 500, 1000\}$, the sample size of the newly generated data needed for the tests, as discussed in Section \ref{sec2}. Then we calculated the empirical power functions for each test, with the five different families of distributions considered in Section \ref{sec4} to be the designed null hypothesis. The results found for each of the two different settings are plotted in Figures \ref{Fig4} and \ref{Fig5}. The latter figure shows that, if we have a small sample size data (in this particular simulation, with $n = 100$), gradually increasing the value of $m$ from $100$ to $1000$ also slightly increases the power, which is good to know in cases of small sample size; see for instance, the AIS data set from Subsection \ref{subsec61}.\end{simul}

\begin{figure}[h!]
    \centering
    \includegraphics[scale=0.52]{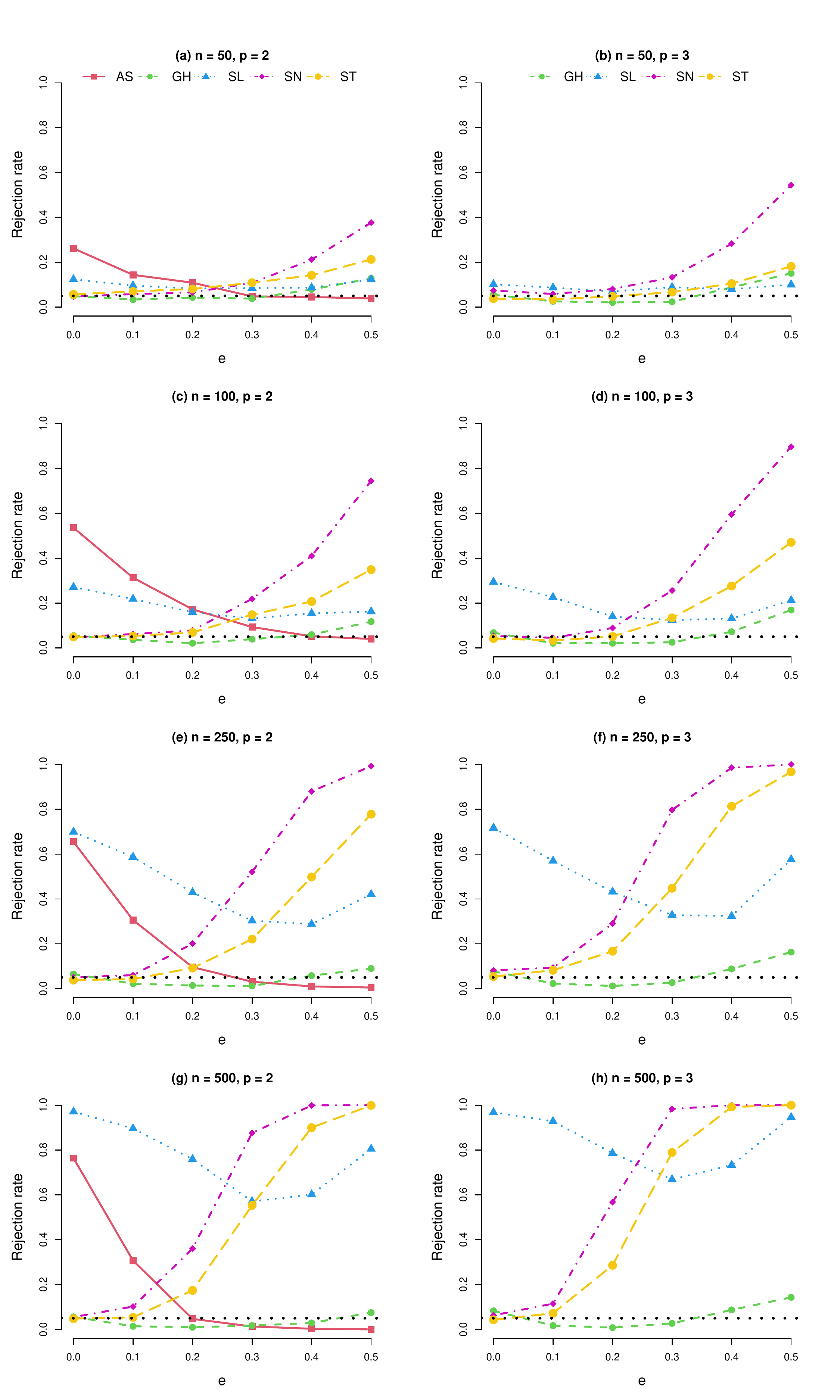}
    \caption{Power of the tests against the alternative family of sinh-arcsinh distributions correspondent to Simulation \ref{sim12}. Here the sample sizes considered are $n \in \{50, 100, 250, 500\}$ and $m = n$.} 
    \label{Fig4}
\end{figure}

\begin{figure}[h!]
    \centering
    \includegraphics[scale=0.52]{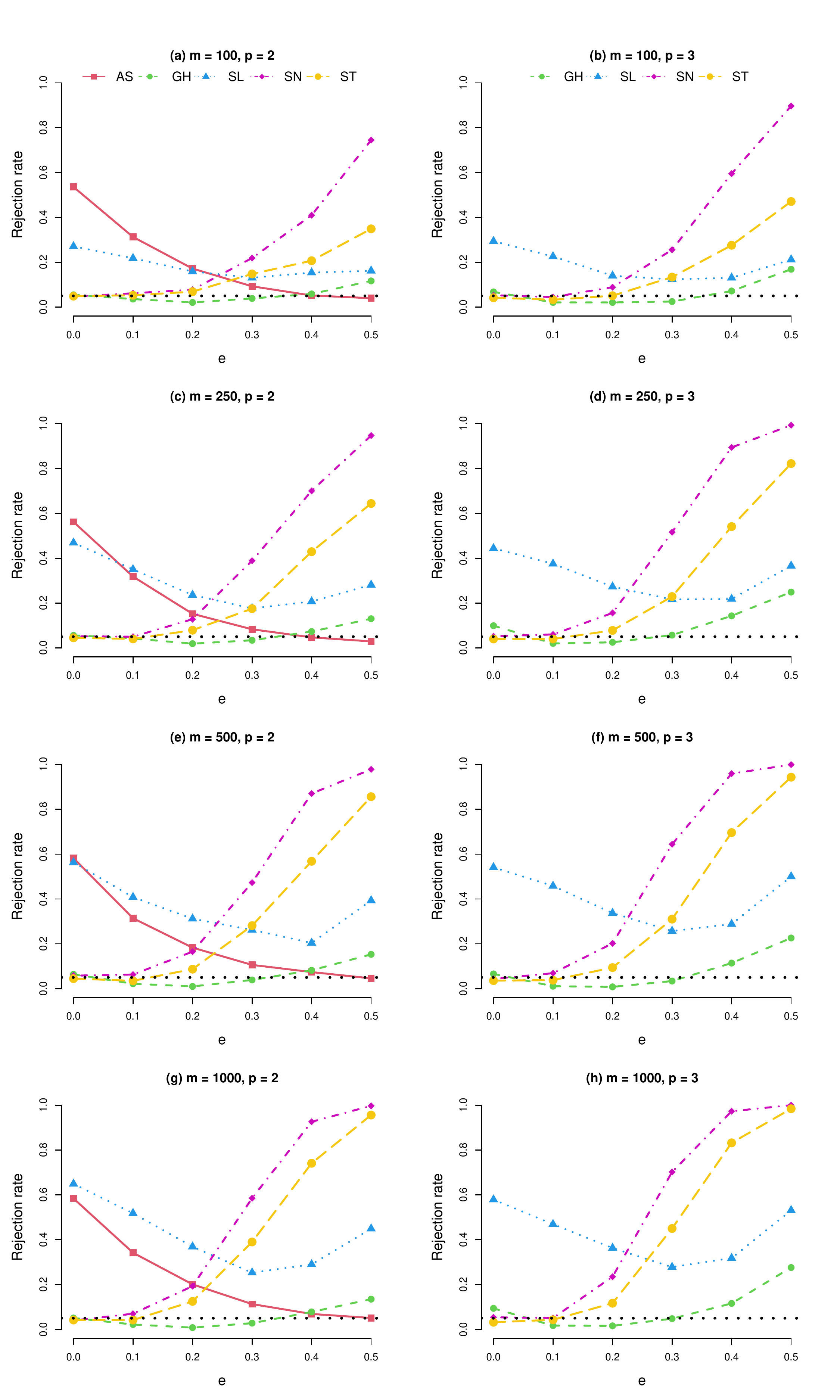}
    \caption{Power of the tests against the alternative family of sinh-arcsinh distributions correspondent to Simulation \ref{sim12} with sample size $n=100$ fixed and $m \in \{100, 250, 500, 1000\}$.} 
    \label{Fig5}
\end{figure}

\subsection{Comparison over competitor tests} \label{subsec54}

Although the major tests proposed in the literature are restricted to the SN family, we dedicate this subsection to discussing and comparing our test to these alternatives. In a retrospective overview, Meintanis and Hl\'avka \cite{MH10} introduced one of the first goodness-of-fit tests for the family of multivariate skew-normal distributions which utilizes the empirical moment-generating function. However, the computational formulas for this test are practically restricted to the two-dimensional case since the case $p > 2$ requires a solution of a differential equation that is difficult to be numerically evaluated. Later on, Balakrishnan et al. \cite{BCS14} proposed a test that is based on the skew-normal's canonical form. Its main advantage relies on the argument that no re-sampling step is needed, saving computational time. However, it is only valid for the skew-normal distribution as the test is based on the fact that the ratios $X_{i,j}/|X_{i,1}|$, for $j \in \{2, \ldots, p\}$, of the observations $\bX_i = (X_{i,1}, \ldots, X_{i,p})^\top$, for $i \in \{1, \ldots, n\}$, are distributed as Cauchy random variables if the $\bX_i$'s follow a canonical skew-normal distribution. More recently, Gonz\'{a}lez-Estrada et al. \cite{GVA22} introduced two randomized tests that, similarly to Balakrishnan et al. \cite{BCS14}'s test, are based on the estimated canonical form of the SN distribution. The first test (W) applies the principle of a generalization of the Shapiro-Wilk test after the sample is transformed into approximately multivariate standard normal observations, whereas the second test (S) relies on a closure property of the sum of univariate independent skew-normal and normal random variables. Jim\'{e}nez-Gamero and Kim \cite{JK15} proposed a pair of re-sampling schemes, one of which is the parametric bootstrap (PB in the paper) which is what we also use in our tests. %The other is the weighted bootstrap (WB). Although the authors claim that their tests are also applicable for dimensions higher than two, no experiments were presented for the SN case.

In the next simulation, we replicate one of the original simulation studies from Balakrishnan et al. \cite{BCS14} and compare the estimated powers of our test with the ones reported by these authors and the ones given in Gonz\'{a}lez-Estrada et al. \cite{GVA22}. 

\begin{simul}
  This study takes into account the $3$-variate skew-normal distribution for the composite null hypothesis, and the ST$_3(\bxi, \bOmega, \balpha, \nu)$ distribution in the alternative set,  with $\bxi$, $\bOmega$, and $\balpha$ defined as follows   
  \be \label{param.BCS}
    \bxi =    \begin{pmatrix}
              1\\ 2\\ 3
            \end{pmatrix}, \qquad
    \bOmega = \begin{pmatrix}
              1 & 1 & 1\\ 1 & 2.5 & 1\\ 1 & 1 & 5
            \end{pmatrix}, \qquad
    \balpha = \begin{pmatrix}
              1 \\ -2 \\ 3
            \end{pmatrix}.   
  \ee      
  The degrees of freedom $\nu$ are taken in the set $\{1, 2, 3, 5, 10\}$. We generated ten rounds of replications of our test with different seeds when the sample size is $n = 100$ and $m = 1000$.  The rejection rates are presented in Table \ref{tab.comp}. For comparison reasons, we transcribed the values of the powers reported for this case in Tables 2 and 5 from Balakrishnan et al. \cite{BCS14} and Gonz\'alez-Estrada et al. \cite{GVA22}. As we can see, all ten rounds of tests have shown higher powers than Balakrishnan et al. \cite{BCS14}'s test. In comparison with Gonz\'alez-Estrada et al. \cite{GVA22}'s test, except for $\nu = 10$, our test also has shown higher powers. This shows that, in particular, when the SN distribution is being tested against the ST distribution, our tests are equivalent or even better in terms of powers than the competitor tests presented in the literature.
\end{simul}
 
\begin{table}[!h]
    \centering
    \caption{Simulated values of power from $1000$ replications when the sample size is $n=100$ and the data originates from a ST$_3(\bxi, \bOmega, \balpha)$ distribution with $\bxi$, $\bOmega$, and $\balpha$ defined as in \eqref{param.BCS} and the composite null hypothesis is assumed to be in the skew-normal family. Values reported in boldface characters indicate the estimated powers that are lower than Gonz\'alez-Estrada et al. \cite{GVA22}'s test.}
    \label{tab.comp}
    \begin{tabular}{llllll}
      \hline
        $\nu$ (degrees of freedom) & 1 & 2 & 3 & 5 & 10\\
      \hline 
        Balakrishnan et al. \cite{BCS14}' test & 0.864 & 0.475 & 0.277 & 0.193 & 0.165 \\
        Gonz\'{a}lez-Estrada et al. \cite{GVA22}'s test & 1.000 & - & 0.979 & 0.758 & 0.299\\[2mm] \hline
        Our test - Round 1 & 1.000 & 1.000 & 0.993 & 0.838 & 0.308\\
        Our test - Round 2 & 1.000 & 1.000 & 0.988 & 0.797 & \textbf{0.247}\\
        Our test - Round 3 & 1.000 & 1.000 & 0.992 & 0.808 & \textbf{0.273}\\
        Our test - Round 4 & 1.000 & 0.999 & 0.989 & 0.801 & \textbf{0.257}\\
        Our test - Round 5 & 1.000 & 1.000 & 0.994 & 0.785 & \textbf{0.281}\\
        Our test - Round 6 & 1.000 & 1.000 & 0.990 & 0.816 & 0.321\\
        Our test - Round 7 & 1.000 & 1.000 & 0.994 & 0.791 & \textbf{0.282}\\
        Our test - Round 8 & 1.000 & 1.000 & 0.991 & 0.778 & \textbf{0.265}\\
        Our test - Round 9 & 1.000 & 1.000 & 0.995 & 0.791 & \textbf{0.238}\\
        Our test - Round 10 & 1.000 & 1.000 & 0.993 & 0.806 & \textbf{0.269}\\
      \hline  
    \end{tabular} 
\end{table}

% #################
%    SECTION 6
% #################

\section{Data applications} \label{sec6}  

This section considers some examples with real-data samples previously presented in the literature. We apply and discuss the results of the goodness-of-fit tests proposed in the present paper. We considered a $5 \%$ confidence level for each test that we performed. The p-values that indicate rejection of the null hypothesis are shown in boldface characters in the tables below. Additionally, to obtain more power for each test, we set $m = \max\{n, 1000\}$ to be in accordance with the results presented in Simulation \ref{sim12}.

\subsection{AIS data set}\label{subsec61}

The Australian Institute of Sport (AIS) data set is one of the classical examples presented by Azzalini and Capitanio \cite{AC99} to illustrate the fitting of a skew-normal distribution. The data consists of biomedical measurements on 100 female and 102 male athletes collected at the Australian Institute of Sport, including body mass index (BMI), body fat percentage (BFP), the sum of skin folds (SSF), and lean body mass (LBM), among others, and it can be retrieved through the \texttt{sn} \cite{A22} package in \textsf{R} \cite{R22}. These four mentioned indexes were also recently used by Balakrishnan et al. \cite{BCS14} and Gonz\'{a}lez-Estrada et al. \cite{GVA22} for testing the goodness-of-fit of the skew-normal distribution. Here, in addition to including tests for the skew-normal family, we also include the tests for asymmetric $\alpha$-stable (two-dimensional case only), Tukey $g$-and-$h$, skew-Laplace, and skew-t distributions. 

We applied our tests on the two-dimensional and four-dimensional data with the athletes segregated by gender, female and male, and we obtained the estimated p-values shown in Table \ref{tabAIS}. We observe that the only test that failed to reject the null hypothesis in the four-dimensional case, for both female and male athletes, was the one with the GH distribution. As for the SL and ST distribution, the tests suggest the rejection of the null hypotheses only for the data on female athletes. Moreover, the test leads to the conclusion in favor of the SN distribution for the data on female athletes, while for the data on male athletes, the test suggests the rejection of the SN distribution. These two tests are, therefore, in line with the conclusions presented by Balakrishnan et al. \cite{BCS14} and Gonz\'{a}lez-Estrada et al. \cite{GVA22}. For the pairwise two-dimensional case, we observe that most of the tests with the AS distribution suggest rejection of the null hypothesis, with 10 out of 12 pairs of data showing p-values lower than 0.05. 

\medskip

\begin{table}[h!]
    \centering
    \caption{Estimated p-values of the tests for the four-dimensional and pairwise two-dimensional AIS data set.} \label{tabAIS}
    \begin{tabular}{lllll}
        \hline
        \multicolumn{5}{c}{\textbf{Four-dimensional AIS data}}\\
        \hline
         & GH & SL & SN & ST\\
        Female athletes 
        & 0.715 & \textbf{0.004} & 0.068 & \textbf{0.022}\\[2mm]  
        Male athletes 
        & 0.175 & 0.503 & \textbf{0.013} & 0.247\\[3mm]
        % \hline
    \end{tabular}
    \begin{tabular}{lllllll}
      \hline
        \multicolumn{7}{c}{\textbf{Pairwise two-dimensional AIS data}} \\      
      \hline
        & & AS & GH & SL & SN & ST \\
        \multirow{6}{*}{\begin{turn}{90}Female athletes\end{turn}} 
        & BMI \& BFP & \textbf{0.002} & 0.074 & \textbf{0.003} & 0.234 & 0.201 \\
        & BMI \& SSF & \textbf{0.000} & \textbf{0.000} & \textbf{0.008} & \textbf{0.044} & \textbf{0.010} \\
        & BMI \& LBM & \textbf{0.000} & 0.618 & 0.203 & 0.102 & 0.236 \\
        & BFP \& SSF & \textbf{0.000} & 0.287 & \textbf{0.001} & \textbf{0.049} & \textbf{0.037} \\
        & BFP \& LBM & \textbf{0.000} & \textbf{0.000} & \textbf{0.010} & 0.115 & 0.087 \\
        & SSF \& LBM & 0.271 & 0.105 & \textbf{0.012} & 0.162 & 0.123 \\[2mm]
        \hline
        \multirow{6}{*}{\begin{turn}{90}Male athletes\end{turn}} 
        & BMI \& BFP & \textbf{0.000} & 0.258 & \textbf{0.033} & \textbf{0.001} & 0.648 \\
        & BMI \& SSF & 0.257 & 0.058 & 0.209 & \textbf{0.003} & 0.778 \\
        & BMI \& LBM & \textbf{0.001} & 0.157 & \textbf{0.038} & 0.300 & 0.486 \\
        & BFP \& SSF & \textbf{0.022} & \textbf{0.018} & 0.213 & \textbf{0.002} & 0.204 \\
        & BFP \& LBM & \textbf{0.000} & 0.281 & 0.052 & \textbf{0.032} & 0.243 \\
        & SSF \& LBM & \textbf{0.041} & \textbf{0.001} & 0.118 & 0.199 &  0.472\\
      \hline
    \end{tabular}
\end{table}

\subsection{BMI of Australian twin sample biometric data} \label{subsec62}

Nowadays it is clear from a statistical perspective that the BMI's population distribution is not symmetric, usually showing skewness to the right towards a higher ratio of weight to height (see Nuttall \cite{N15}). By considering the BMI observations of the AIS data discussed in Subsection \ref{subsec61}, Marchenko and Genton \cite{MG10} presented strong evidence that the skewness parameter is different from zero. It has also been pointed out in the literature (see, e.g., Tran et al \cite{TWA18} and Tsang et al. \cite{TDDT18}) that the skew-t distribution is reasonably competitive when describing unimodal BMI data. So, as our second application, we consider the observations of BMI of monozygotic (MZ) twins retrieved from the \texttt{twinData} set, available in the \texttt{OpenMx} \cite{B22} package in \textsf{R} \cite{R22}. The reason why we decided to use this data set, instead of the AIS data, is because it has more observations and they are more homogeneous. In our analysis, we consider individuals of all ages, separated by gender, with 1171 pairs of females and 532 pairs of males, and we only removed the pairs of twins that showed missing BMI data. 

We fitted the two-dimensional vectors of observed BMI to the five distributions introduced in Section \ref{sec4} and applied our goodness-of-fit tests. The estimated p-values are presented in Table \ref{tabtwinMZ}. For the $5 \%$ confidence level, the tests rejected the SN and ST distributions for both female and male MZ twins. Nevertheless, it is interesting to observe that the p-value of the ST test is significantly higher than the one obtained by the SN test, which corroborates the claims found in the literature that the ST distribution is reasonably better. As for the AS and SL distributions, the tests showed ambiguous results for the two genders, rejecting the two distributions in the male case and showing a relatively high p-value for the female case. Lastly, the test with the GH distribution did not show enough evidence to reject the null hypothesis for both data on females and males.

\begin{table}[h!] 
    \centering
    \caption{Estimated p-values of the test correspondent to the two-dimensional BMI of Australian monozygotic twin sample data sets.} \label{tabtwinMZ}
    \begin{tabular}{llllll}
     \hline  
      & AS & GH & SL & SN & ST \\
      Female MZ twins 
      & 0.458 & 0.290 & 0.288 & \textbf{0.000} & \textbf{0.042} \\[2mm]
      Male MZ twins   
      & \textbf{0.000} & 0.216 & \textbf{0.000} & \textbf{0.012} & \textbf{0.046} \\
     \hline
    \end{tabular}
\end{table}

\subsection{Wind speed data}

As a third and final example, we tested the \emph{wind speed} data set presented in Azzalini and Genton \cite{AG08}), consisting of 278 observations of hourly average wind speed measurements from February 25 to November 30, 2003, recorded at midnight and collected at three meteorological towers: Goodnoe Hills (gh), Kennewick (kw), and Vansycle (vs), located along the Columbia Gorge and the Oregon–Washington border in the US Pacific Northwest. Azzalini and Genton \cite{AG08} proposed the fitting of the data by using an i.i.d. skew-t three-dimensional model, claiming that it ``brings significant improvements over the normal distribution''. This same data set was also used by Arslan \cite{A10} to illustrate the fitting of the skew-Laplace distribution. The author considered the two-dimensional vectors of wind speed recorded at the towers (gh, kw) and (vs, gh), arguing that the data were satisfactorily fitted to the scatterplots by the skew-Laplace distribution and that it captured the skewness and the apparent heavy tailedness.

We run our goodness-of-fit tests on the tri-dimensional wind speed data set to verify if any of the skewed models introduced in Section \ref{sec4} is inappropriate. To get additional information, we also applied the same tests to the pairwise two-dimensional data sets, now including the $\alpha$-stable distribution. The estimated p-values are presented in Table \ref{tabWS}. Considering the $5 \%$ level of significance, in the tri-dimensional case, only the test for the GH distribution did not show enough evidence for rejecting the null hypothesis, whereas all the other tests, namely, for SL, SN, and ST distributions, presented a p-value lower than $0.05$, thus suggesting the rejection of these three distributions. In the pairwise two-dimensional case, most of the tests suggest rejection of the null hypothesis. This conclusion might be because the data shows signs of bi-modality and perhaps a mixture of distributions is more appropriate to model this data set.
 
\begin{table}[h!]
    \centering
    \caption{Estimated p-values of the goodness-of-fit tests correspondent to the tri-dimensional and pairwise two-dimensional wind speed data set.} \label{tabWS} 
    \begin{tabular}{cccc}
      \hline
        \multicolumn{4}{c}{\textbf{Three-dimensional data}} \\
       \hline 
        GH & SL & SN & ST\\
        0.938 & \textbf{0.000} & \textbf{0.000} & \textbf{0.000}\\[3mm]
      %\hline
    \end{tabular}\\
    \begin{tabular}{lccccc}
      \hline
        \multicolumn{6}{c}{\textbf{Pairwise two-dimensional data}} \\
      \hline  
        & AS & GH & SL & SN & ST\\
        gh \& kw &
        \textbf{0.006} & \textbf{0.000} & \textbf{0.000} & \textbf{0.002} & \textbf{0.006} \\
        gh \& vs &
        \textbf{0.001} & 0.067 & \textbf{0.000} & \textbf{0.000} & \textbf{0.004} \\
        kw \& vs &
        0.397 & \textbf{0.001} & \textbf{0.000} & \textbf{0.000} & \textbf{0.000} \\
      \hline
    \end{tabular}  
\end{table}

% #################
%    SECTION 7
% #################

\section{Conclusion} \label{sec7}  

In this paper, we proposed a goodness-of-fit test for several types of multivariate skewed distributions. On the one hand, the major advantage of the technique addressed in our work resides in the fact that it is flexible and can be applied to any multivariate parametric family of distributions, provided that a reasonable method of estimation of its parameters is available and that the generation of new replicates is feasible. On the other hand, in terms of computational cost, the implementation is highly demanding since the parametric bootstrap step requires an extra cycle of re-sampling within each Monte Carlo run. 

While the need for such nested re-sampling is shared by most goodness-of-fit tests available in the literature,  this drawback can be easily circumvented with the use of a parallel algorithm, since the parametric bootstrap does not require any sequential procedures, and with the use of the warp-speed bootstrap method of Giacomini et al. \cite{GPW13}. An important fact to be mentioned is that all tests were run with the help of an Intel Xeon Gold 6230R CPU, of which 100 out of its 104 threads have been intensively used to accelerate even more the completion of the simulations. We demonstrated its effectiveness through five families of multivariate distributions, namely, the multivariate skew-normal, skew-t, asymmetric skew-Laplace, skew $\alpha$-stable, and Tukey $g$-and-$h$ (for most of which there are no available tests), by utilizing the corresponding canonical forms whenever possible.

As the simulations in Subsection \ref{subsec51} show (see Table \ref{tab1}), the estimated sizes of the test are reasonable and consistent for all five families. Similarly, the simulations presented in Subsections \ref{subsec52} and \ref{subsec53} show that the tests have enough power to detect and reject alternative hypotheses. Compared to the alternative options of tests introduced in the literature, as presented in Subsection \ref{subsec54}, for the particular case when testing under the composite null hypothesis of an SN distribution, our test has also shown to be better in terms of power when testing against the alternative ST distribution. 

The effectiveness of our tests has also been illustrated with real data examples in Section \ref{sec6}, showing its applicability and usefulness when applied to biological and natural events observed, respectively, in our daily lives and our environment. In closing we wish to remind the reader that our test allows a certain flexibility concerning the actual kernel $\Psi$ used; refer to the last paragraph of Section \ref{sec2}. In this connection, it would be interesting to investigate the effect that this choice has on the finite-sample properties of our test. More work is needed in this direction.     

\section*{Acknowledgements}
\noindent
This research was supported by the King Abdullah University of Science and Technology (KAUST). 
%ADD LATER: Simos Meintanis wishes to sincerely thank Marc Genton and the staff of KAUST for the invitation and the hospitality rendered during his visit to KAUST.  

\section*{Competing interests}
\noindent
The authors have no financial or proprietary interests in any material discussed in this article.

\bibliographystyle{acm}

{\footnotesize
\bibliography{references}}

\end{document}